\documentclass[12pt]{article}

\usepackage{amsmath,amsthm,amsfonts,amssymb,amscd,enumerate,graphicx,color}

\allowdisplaybreaks[4]
\newtheorem {lemma}{Lemma}[section]
\newtheorem {theorem} {Theorem}[section]
\newtheorem{proposition}{Proposition}[section]

\numberwithin{equation}{section}

\usepackage[T1]{fontenc}
\usepackage[utf8]{inputenc}
\usepackage{authblk}

\allowdisplaybreaks [4]
 \textheight = 9in \textwidth = 5.5in
\headsep = 0.0in \headheight = 0.0in \topmargin = 0.3in
\begin{document}

\title{Minimum status, matching and domination of graphs}

\author[a]{Caixia Liang} 
\author[b]{Bo Zhou\footnote{Corresponding author. E-mail: zhoubo@scnu.edu.cn}}
\author[b]{Haiyan Guo}
\affil[a]{School of Mathematics and Statistics,
Zhaoqing University, Zhaoqing 526061, P.R. China}
\affil[b]{School of Mathematical Sciences,
South China Normal University, Guangzhou 510631, P.R. China}



\date{}
\maketitle

\begin{abstract}
Given a graph, the status of a vertex  is the sum of the distances between the vertex and all other vertices. The minimum status of a graph  is  the minimum of statuses of all vertices of this graph. We give a sharp upper bound for the minimum status of a connected graph with fixed  order and matching number (domination number,  respectively), and characterize the unique trees achieving the bound. We also determine the unique tree such that its  minimum status is as small as possible when   order and matching number (domination number,  respectively) are fixed.\\ \\
{\it Keywords:\/} minimum status,  proximity,  matching number, domination number, minimum branch-weight
\end{abstract}

\section{Introduction}

All graphs considered in this paper are finite,  simple and connected.  Let $G$ be a  graph with vertex set $V(G)$ and edge set $E(G)$.  For $u\in V(G)$,  the status of $u$ in $G$  is defined as the sum of the distances from $u$ to all other vertices in $G$,  denoted by $s_G(u)$ or $s(u)$,  see \cite{BH}.  That is, $s_G(u)=\sum_{v\in V(G)}d_G(u, v)$,  where $d_G(u, v)$ is the distance between vertices $u$ and $v$ in $G$,  i.e.,   the length (number of edges)  of a shortest path connecting $u$ and $v$ in $G$.  The status of a vertex is also known as its transmission \cite{Do, KS} or its total distance \cite{BBC}.  The  minimum status of $G$,  denoted by  $s(G)$,   is defined as
\[
s(G)=\min\{s_G(u): u\in V(G)\}.
\]
The study of this concept is essentially the study of the the proximity: If the order of $G$ is $n$,  then the proximity of $G$ is defined as $\pi (G)=\frac{1}{n-1}s(G)$ \cite{A,  ACH,  AH}.  The set $\{u\in V(G): s_G(u)=s(G)\}$ is called the median of $G$.

For a graph $G$ with $u \in V(G)$,   $G-u$ denotes the graph obtained from $G$ by deleting $u$ (and its incident edges).  Let $T$ be a tree.  For $u\in V(T)$,  the components of $T-u$ are known as the  branches of $T$ at $u$.  The branch-weight of a vertex $u$ in $T$  is the maximum number of vertices of branches of $T$ at $u$,  denoted by $w_T(u)$.  The  minimum branch-weight of $T$ is defined as
\[
w(T)=\min\{w_T(u): u\in V(T)\}.
\]
The set $\{u\in V(T):  w_T(u)=w(T)\}$ is called the centroid of $T$.  Zelinka \cite{Zelinka} showed that in a tree a vertex  is in  its median if and only if it is in its centroid,  i.e.,   for any tree,  its median is equal to its  centroid,  see also \cite{KA}.

Aouchiche and Hansen~\cite{AH} gave sharp lower and upper bounds for the  proximity and hence the minimum status of a graph as a function of its order,  and characterized the extremal graphs,  and
they also give a sharp lower bound for the  proximity and hence the minimum status of a graph with fixed order and diameter.  Lin et al.~\cite{LTSZ} found sharp lower and upper bounds for the minimum status of a graph with fixed order and maximum degree,  characterized extremal graphs attaining  the lower bound and found a necessary condition for graphs attaining the upper bound.
Alternate proofs were given by Rissner and Burkard~\cite{RB}.  Lin et al. ~\cite{LT2} extended the minimum status  to weighted graphs,  and they gave a formula for  the minimum status of the Cartesian product of two weighted graphs. Rissner and Burkard~\cite{RB} also showed that both radius and status obtain their minimum and maximum values at the same type of trees when order and maximum degree are fixed. Related work may be found in \cite{AH2, D,  D2,  Se}.

A matching of a graph $G$ is a set of edges in which any two distinct edges are not adjacent.  The matching number of $G$,  denoted by $m(G)$,  is the maximum of the cardinality of its matchings.  For a graph $G$ of order $n$,  $m(G)\leq \lfloor\frac{n}{2}\rfloor$.

A dominating set of a graph $G$ is a set $S$ of vertices such that each vertex of $G$ outside $S$ is adjacent to at least one vertex in $S$.  The domination number of $G$,  denoted by $\gamma (G)$,  is the  minimum cardinality of dominating sets of $G$.  If $G$ is a graph on $n$ vertices without isolated vertices,  then $\gamma(G)\leq \lfloor \frac{n}{2}\rfloor$,  see \cite{Ore}.  A dominating set of $G$ of cardinality $\gamma(G)$ is called $\gamma(G)$-set.  Bollob\' as and Cockayne \cite{B} showed that a graph $G$  without isolated vertices has a  $\gamma(G)$-set $S$ such that for each vertex $u \in S$,  there is a vertex of $G$ outside $S$  that is adjacent to $u$ but no other vertices in $S$,  and thus $m(G)\geq \gamma(G)$.

Following the above work, in this paper,  we find sharp lower and upper bounds for the minimum status of trees in term of  order and  matching number (domination number,  respectively),
and we also characterize all extremal cases. The upper bounds may be extended to graphs.

\section{Transformations}

In this section,  we give two types of transformations such that the minimum status is decreased or increased.

\begin{lemma}  \cite{KA,  LTSZ} \label{L22}
Let $T$ be a tree on  $n$ vertices and $x$ a vertex of $T$.  Then $x$ is in the median of $T$ if and only if $w_T(x)\leq \frac {n}{2}$.
\end{lemma}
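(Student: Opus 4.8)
The plan is to work directly with the status function and to track how it changes as we cross a single edge; the median characterization then drops out from a telescoping argument along paths. The key preliminary step is an edge formula: if $xy\in E(T)$ and $b$ denotes the order of the branch of $T$ at $x$ in the direction of $y$ (equivalently, the number of vertices in the component of $T-x$ containing $y$), then
\[
s_T(y)-s_T(x)=n-2b.
\]
To see this, observe that deleting the edge $xy$ splits $V(T)$ into the $x$-side (of order $n-b$) and the $y$-side (of order $b$): each vertex on the $y$-side lies one step closer to $y$ than to $x$, while each vertex on the $x$-side lies one step farther, so the net change is $(n-b)-b=n-2b$. This identity is the engine of the whole argument.

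For the forward direction, assume $w_T(x)\le\frac{n}{2}$ and let $u$ be arbitrary. Take the unique path $x=v_0,v_1,\dots,v_k=u$ in $T$, and for each $i$ let $c_i$ be the order of the component of $T-v_i$ containing $v_{i+1}$. Summing the edge formula along this path gives $s_T(u)-s_T(x)=\sum_{i=0}^{k-1}(n-2c_i)$. The point is that $c_0$ is exactly the branch order at $x$ toward $v_1$, so $c_0\le w_T(x)\le\frac{n}{2}$, and the sequence $c_0,c_1,\dots$ is strictly decreasing, whence every $c_i\le c_0\le\frac{n}{2}$ and every summand $n-2c_i\ge 0$. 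Therefore $s_T(x)\le s_T(u)$ for all $u$, so $x$ lies in the median.

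For the reverse direction, assume $w_T(x)>\frac{n}{2}$. Then some branch of $T$ at $x$, say the one containing a neighbour $y$, has order $b>\frac{n}{2}$, and the edge formula yields $s_T(y)-s_T(x)=n-2b<0$. Hence $x$ fails to minimize the status and cannot belong to the median, completing the equivalence.

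The only point demanding care — and the step I expect to be the main obstacle — is the nesting claim $c_0>c_1>\cdots$ used in the forward direction: one must check that the component of $T-v_i$ containing $v_{i+1}$ is properly contained in the component of $T-v_{i-1}$ containing $v_i$ (it loses at least the vertex $v_i$ and any side branches at $v_i$), so that moving one step farther from $x$ can only shrink the relevant branch. Once this monotonicity is in hand, each summand stays nonnegative and both directions close at once; the statement then reads precisely as the assertion that the median equals the set of vertices of branch-weight at most $\frac{n}{2}$, in agreement with the equality of median and centroid recalled in the introduction.
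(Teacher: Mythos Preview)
Your argument is correct. The edge formula $s_T(y)-s_T(x)=n-2b$ is derived properly, the telescoping along the $x$--$u$ path is valid, and the strict monotonicity $c_0>c_1>\cdots$ holds exactly for the reason you give: the component of $T-v_i$ containing $v_{i+1}$ is obtained from the component of $T-v_{i-1}$ containing $v_i$ by deleting $v_i$ together with all side branches at $v_i$ not in the direction of $v_{i+1}$, so it loses at least one vertex. The reverse direction is immediate from the same edge formula.

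Note, however, that the paper itself does \emph{not} supply a proof of this lemma: it is quoted from the literature (Kang--Ault and Lin et al.) and stated without argument. So there is no ``paper's own proof'' to compare against. What you have written is precisely the standard proof of this classical fact---the status-difference formula across an edge, followed by the observation that branch sizes shrink strictly as one walks away from a centroid vertex---and it is the argument one finds in the cited sources. Nothing is missing.
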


For a graph $G$ with $uv\in E(G)$,  $G-uv$ denotes the graph obtained from $G$ by deleting the edge $uv$.  If $zw$ is an edge of the complement of a graph $G$,  then $G+zw$ denotes the graph obtained from $G$ by adding the edge $zw$. For $v\in V(G)$, by $N_G(v)$ we denote the set of vertices of $G$ that are adjacent to $v$, and by $d_G(v)$ we denote the degree of $v$ in $G$, i.e., $d_G(v)=|N_G(v)|$.

A pendant vertex is a vertex of degree one. The vertex adjacent to a pendant vertex is said to be a quasi-pendant vertex. A pendant edge is an edge incident to a pendant vertex. A cut edge is an edge whose removal disconnect the graph.
A non-pendant cut edge in a graph is a  cut edge that is not a pendant edge. For a graph $G$ with $u\in V(G)$ and $v\notin  V(G)$, if $G'$ is the graph with $V(G')=V(G)\cup \{v\}$ and $E(G')=E(G)\cup\{uv\}$, then we say that $G'$ is obtained from $G$ by attaching a pendant vertex at $u$.

\begin{proposition} \label{L23}
Let $G$ be a connected graph and $uv$ be a non-pendant cut edge of $G$.  Let $G_{uv}$  be the graph obtained from $G$ by contracting $uv$ to a vertex $u$ and attaching a  pendant vertex $v$ to $u$.  Then $s(G) >s(G_{uv})$.
\end{proposition}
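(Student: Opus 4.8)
The plan is to exploit the cut-edge structure directly and to show that under the transformation the status of \emph{every} vertex strictly decreases, so in particular the minimum status does. Since $uv$ is a cut edge, $G-uv$ has exactly two components; write $U$ for the one containing $u$ and $W$ for the one containing $v$, and put $a=|V(U)|$, $b=|V(W)|$, so $a+b=n$. Because $uv$ is non-pendant, neither endpoint is a pendant vertex, hence $a\ge 2$ and $b\ge 2$; the resulting strict inequalities $a-1\ge 1$ and $b-1\ge 1$ are exactly what will drive the final comparison.

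First I would record the distance formulas forced by the fact that every $U$--$W$ path in $G$ uses the bridge $uv$: distances inside $U$ (resp.\ inside $W$) agree with those in $G$, and for $x\in V(U)$, $y\in V(W)$ one has $d_G(x,y)=d_U(x,u)+1+d_W(v,y)$. In $G_{uv}$ the contracted vertex $u$ plays the role of both old endpoints at once: the $U$-side embeds isometrically, the set $V(W)\setminus\{v\}$ together with the contracted $u$ reproduces $W$ isometrically (the old $v$ being identified with $u$), every cross-distance now routes through the single vertex $u$, and the freshly attached pendant $v$ sits at distance $1+d_{G_{uv}}(u,\cdot)$ from all other vertices.

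Next I would compute the change in status one vertex at a time. Writing $s_U$ and $s_W$ for statuses computed inside the respective pieces, a short summation gives
\[
s_{G_{uv}}(x)=s_G(x)-(b-1)\quad\text{for every }x\in V(U),
\]
\[
s_{G_{uv}}(y)=s_G(y)-(a-1)\quad\text{for every }y\in V(W)\setminus\{v\},
\]
while the contracted vertex satisfies $s_{G_{uv}}(u)=s_G(u)-(b-1)=s_G(v)-(a-1)$, so it undercuts the statuses of both old endpoints. The clean way to package this is the correspondence fixing every vertex of $V(U)$ and of $V(W)\setminus\{v\}$ and sending $v$ to the contracted $u$: under it the image in $G_{uv}$ always has strictly smaller status than its preimage in $G$.

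Finally I would conclude: if $z$ is a median vertex of $G$, so $s_G(z)=s(G)$, then its image $z'$ satisfies $s_{G_{uv}}(z')<s_G(z)=s(G)$ by $a-1\ge 1$ and $b-1\ge 1$, whence $s(G_{uv})\le s_{G_{uv}}(z')<s(G)$. I expect the only real work to be bookkeeping—verifying the two isometric embeddings and correctly handling the contracted vertex together with the new pendant $v$—rather than any conceptual obstacle; the genuinely load-bearing fact is simply that both sides of a non-pendant bridge contain at least two vertices.
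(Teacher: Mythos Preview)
Your argument is correct and follows essentially the same approach as the paper: pick a median vertex $z$ of $G$, observe that since $uv$ is a bridge all cross-distances route through it, and compute that the status of (the image of) $z$ in $G_{uv}$ drops by $|V(W)|-1$ or $|V(U)|-1$, both of which are at least $1$ because the bridge is non-pendant. The only difference is that the paper computes this drop for a single median vertex (assuming WLOG it lies on the $u$-side), whereas you carry out the computation for \emph{every} vertex via your correspondence $\phi$; this yields a slightly stronger intermediate statement than is needed, but the core mechanism is identical.
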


\begin{proof}
Let $x$ be a vertex in the  median  of $G$.  Then $s(G)=s_G(x)$ and $s(G_{uv})\le  s_{G_{uv}}(x)$.   Let $G_1$ be  the component of $G-uv$ contains $u$ and $G_2$ be the component of $G-uv$ contains $v$.   Suppose without loss of generality that  $x\in V(G_1)$.  As we pass from $G$ to $G_{uv}$,  the distance between $x$ and a vertex of $V (G_1)\cup \{v\}$ remains unchanged and the distance between $x$ and a vertex of $V (G_2) \setminus \{v\}$ is decreased by $1$.  It follows that
\begin{eqnarray*}
s(G)-s(G_{uv}) &\geq & s_{G}(x)-s_{G_{uv}}(x)\\
&=& \sum_{w\in V(G)}\left(d_G(x,w)-d_{G_{uv}}(x,w)\right)\\
&=& \sum_{w\in V (G_2) \setminus \{v\}} 1\\
&=& |V(G_2)|-1\\
&>&0,
\end{eqnarray*}
i.e.,   $s(G) >s(G_{uv})$.
\end{proof}

A pendant path at $v$ in a graph $G$ is a path connecting $v$ and some vertex, say $w$ in $G$, such that $d_G(w)=1$, the degree of each internal vertex (if any exists) is two, and $d_G(v)\ge 2$.

\begin{proposition} \label{L26}
Let $T$ be a tree of order $n$ with $u\in T$ and $N_T(u)=\{u_1, \dots, u_k\}$,  where $k\geq 3$.  Let $B_i$ be the branch of $T$ at $u$ containing $u_i$ for $1\leq i\leq k$.   For $w\in V(B_2)$,  let $T'=T-\{uu_i: 3\leq i\leq  t\}+\{wu_i: 3\leq i\leq  t\}$,  where $3\leq t \leq k$.  Suppose that $|V(B_1)|\geq |V(B_2)|$.   Then $s(T')>s(T)$.
\end{proposition}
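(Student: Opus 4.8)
The plan is to establish the stronger statement that $s_{T'}(x) > s(T)$ holds for \emph{every} vertex $x$, from which $s(T') = \min_{x} s_{T'}(x) > s(T)$ follows at once (the minimum over finitely many quantities, each exceeding $s(T)$, exceeds $s(T)$). Throughout I write $M = \bigcup_{i=3}^{t} V(B_i)$ for the set of moved vertices and $R = V(T)\setminus M$ for the rest, so that $u, w \in R$ and $w \in V(B_2)$. I set $d = d_T(u,w)$; since the $u$--$w$ path lies inside $V(B_2)\cup\{u\}$ and is untouched by the transformation, $d_{T'}(u,w) = d$ as well, and $u_2$ is the first vertex of this path after $u$.

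The first step is the distance bookkeeping. Since $T'$ and $T$ induce the same subtree on $R$, and a path between two vertices of $R$ never leaves $R$, all distances inside $R$ are preserved. For $x \in R$ and $z \in M$ the path to $z$ is rerouted through $w$ rather than $u$, so $d_{T'}(x,z) - d_T(x,z) = d_T(x,w) - d_T(x,u)$; and for $x,z \in M$ the distance is unchanged, because both branches now hang off $w$ with their internal lengths intact. Summing over all vertices gives two clean formulas: for $x \in R$,
\[ s_{T'}(x) - s_T(x) = |M|\,\bigl(d_T(x,w) - d_T(x,u)\bigr), \]
and for $x \in M$ the difference equals the constant $C = \sum_{y \in R}\bigl(d_T(w,y) - d_T(u,y)\bigr)$, which does not depend on $x$.

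The hypothesis $|V(B_1)| \ge |V(B_2)|$ enters only through one counting inequality. Writing $b = |V(B_2)|$ and $m = |M|$, I would note $n \ge 1 + |V(B_1)| + |V(B_2)| + |M| \ge 1 + 2b + m$, hence $n - 2b - m \ge 1$. Two of the three cases are then immediate. For $x \in R\setminus V(B_2)$ (in particular $x=u$), the $u$--$w$ path forces $d_T(x,w)-d_T(x,u)=d$, so $s_{T'}(x) = s_T(x) + |M|d \ge s(T) + |M|d > s(T)$. For $x \in M$, splitting $C$ over $R\setminus V(B_2)$ (each term equal to $d$) and over $V(B_2)$ (bounded below by the triangle inequality inside $B_2$) yields $C \ge d\,(n-2b-m) \ge d > 0$, so $s_{T'}(x) = s_T(x) + C > s(T)$.

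The main obstacle is the remaining case $x \in V(B_2)$: here $d_T(x,w) - d_T(x,u)$ may be negative, so comparing $s_{T'}(x)$ with $s_T(x)$ points the wrong way. The fix is to compare instead with $s_T(u)$, which is legitimate since $s_T(u) \ge s(T)$. The plan is to expand $s_{T'}(x) - s_T(u)$ by combining the formula above with the decomposition of $s_T(x) - s_T(u)$ over the branches at $u$ (every vertex outside $B_2$ contributes $d_T(u,x)$, while the $B_2$-contribution is an intra-$B_2$ status difference), and then to bound that intra-$B_2$ difference below by $-b\,d_{B_2}(u_2,x)$ via the triangle inequality, where $d_{B_2}$ denotes distance within $B_2$. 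With $e = d_{B_2}(u_2,x) \ge 0$, everything collapses to $(1+e)(n-2b-m) + m\,d_{B_2}(x,w)$, which by the counting bound is at least $1 > 0$. Thus $s_{T'}(x) > s_T(u) \ge s(T)$, and all vertices are covered.
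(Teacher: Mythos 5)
Your proof is correct, and it takes a genuinely different route from the paper's. The paper first uses the branch-weight characterization of the median (Lemma~\ref{L22}) to show that the median of $T$ avoids $V(B_2)$, then splits into cases according to whether the median vertex is $u$ or lies in some other branch, and in each case must identify a vertex that lands in the median of $T'$ (again via Lemma~\ref{L22}, with a further subcase when $\sum_{i=2}^{t}|V(B_i)|>\frac n2$ forces the new median into $B_2$) before comparing statuses. You instead prove the stronger pointwise statement that $s_{T'}(x)>s(T)$ for \emph{every} vertex $x$, which makes the result immediate and removes any need to locate medians: the exact transfer formulas $s_{T'}(x)-s_T(x)=|M|\bigl(d_T(x,w)-d_T(x,u)\bigr)$ for $x\in R$ and $s_{T'}(x)-s_T(x)=C$ for $x\in M$ are correct, the counting inequality $n-2b-m\ge 1$ is exactly where $|V(B_1)|\ge|V(B_2)|$ is used, and your handling of the one delicate case $x\in V(B_2)$ --- comparing against $s_T(u)$ rather than $s_T(x)$ and collapsing everything to $(1+e)(n-2b-m)+m\,d_{B_2}(x,w)\ge 1$ --- checks out. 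What your approach buys is independence from the median--centroid machinery and a slightly stronger conclusion (every vertex of $T'$ has status exceeding $s(T)$); what the paper's approach buys is consistency with the toolkit (Lemma~\ref{L22}) that it reuses throughout the later sections.
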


\begin{proof}
Since $|V(B_1)|+|V(B_2)|+2\leq n$ and $|V(B_1)|\ge |V(B_2)|$,  we have $|V(B_2)|\leq \frac{n}{2}-1$.  Let $x$ be a vertex in the median of $T$.   For any $v\in V(B_2)$,  we have  $w_T(v)\geq n-|B_2|\geq \frac{n}{2}+1$,  and thus  $x\notin V(B_2)$ by Lemma~\ref{L22}.

\noindent {\bf Case 1. } $x=u$.

In this case,  $w_{T'}(x) = \max\left\{w_T (x), \sum_{i=2}^{t}|V(B_i)|\right\}$.
Suppose first that  $\sum_{i=2}^{t}|V(B_i)| \leq \frac {n}{2}$. As $w_{T'}(u) \leq \frac {n}{2}$,  $u$ is in the median of $T'$ by Lemma~\ref{L22}.   As we pass from $T$ to $T'$,  the distance between $u$ and a vertex of $V (B_3) \cup \dots \cup V (B_t)$ is increased by $d_T(u, w)$,  and the distance between $u$ and any other vertex remains unchanged.  Thus
\[
s(T')-s(T)=s_{T'}(u)-s_T(u)=d_T (u, w)\sum_{i=3}^{t}|V(B_i)|>0,
\]
implying $s(T')>s(T)$.
Thus, we may suppose that $\sum_{i=2}^{t}|V(B_i)| > \frac {n}{2}$.
Then  there must exist a vertex $v \in V(B_2)$ such that $w_{T'} (v) \leq \frac {n}{2}$.  By Lemma \ref{L22},  $v$ is in the median of $T'$.
Note that
\[
d_{T'}(v,z)-d_T(u,z)=d_T (u, v) \mbox{ for } z\in V (B_1) \cup(V (B_{t + 1}) \dots \cup V (B_k)) \cup \{u\},
\]
%
%
%
\[
d_{T'}(v,z)-d_T(u,z)=d_{T}(v,z)-d_T(u,z)\ge -d_T (u, v) \mbox{ for } z\in V (B_2),
\]
%
and
\[
d_{T'}(v,z)-d_T(u,z)=d_T (v, w)  \mbox{ for } z\in V (B_3)\cup \dots \cup V(B_t).
\]
Thus
\begin{eqnarray*}
s(T')-s(T)&=& s_{T'}(v)-s_T(u)\\
&=&\sum_{z\in V(T')}d_{T'}(v,z)-\sum_{z\in V(T)} d_{T}(u,z)\\
 &\geq  & d_T (u, v)\left(|V(B_1)|+\sum_{i=t+1}^{k}|V(B_i)|+1-|V(B_2)|\right)\\
          &&+d_T (v, w)\sum_{i=3}^{t}|V(B_i)|\\
          &\ge & d_T (v,w)\\
          &>& 0,
\end{eqnarray*}
implying $s(T')>s(T)$.

\noindent {\bf Case 2. } $x\in \cup_{i=1}^kV(B_i)\setminus V(B_2)$.

In this case $w_{T'} (x)=w_T (x)$.  By Lemma \ref{L22},  $x$ is also in the  median of $T'$.

Suppose first that $x\in V(B_i)$ for  $i=1$ or $t+1\leq i \leq k$.  As we pass from $T$ to $T'$,  the distance between $x$ and a vertex of $V (B_3) \cup \dots \cup V (B_t)$ is increased by $d_T (u, w)$,  and the distance between $x$ and any other vertex remains unchanged.  It follows that
\[
s(T')-s(T)=s_{T'}(x)-s_T(x)=d_T (u, w)\sum_{i=3}^{t}|V(B_i)|>0,
\]
and thus $s(T')>s(T)$.

Next, suppose that  $x\in V(B_i)$ for $3 \leq i \leq t$,  As we pass from $T$ to $T'$,  the distance between $x$ and a vertex of $V (B_1)\cup (V (B_{t+1}) \dots \cup V (B_k))\cup \{u\}$ is increased by $d_T (u, w)$,  the distance between $x$ and a vertex $z$ of $V (B_2)$ is decreased by $d_T(u,z)-d_T(w,z)$, which is less than or equal to
$d_T (u, w)$,  and the distance between $x$ and any other vertex remains unchanged.  Thus
\begin{eqnarray*}
s(T')-s(T)&=&s_{T'}(x)-s_T(x)\\
& \geq & d_T (u, w)\left(|V(B_1)|+\sum_{i=t+1}^{k}|V(B_i)|+1-|V(B_2)|\right)\\
&\ge & d_T (u, w)\left(\sum_{i=t+1}^{k}|V(B_i)|+1\right)\\
&>&0,
\end{eqnarray*}
and thus $s(T')>s(T)$.

The result follows by combining the above cases.
\end{proof}

\section{Minimum status and matching number}

In this section,  we find sharp lower and upper bounds for the minimum status of a tree with fixed order and matching number and characterize the trees attaining these bounds. We note the upper bound may be extended for connected graphs.

By  $A_{n,m}$ we denote the tree  obtained from the star  $S_{n-m+1}$ by attaching a pendant edge to each of certain $m-1$ non-central vertices of $S_{n-m+1}$.   The center of $A_{n,m}$   is the center of  the star  $S_{n-m+1}$.


\begin{theorem} \label{NT1}
Let $T$ be a tree of  order $n$ with  matching number $m$,  where $1 \leq m \leq \lfloor \frac{n}{2}\rfloor$.  Then
$s(T ) \geq n+m-2$ with equality if and only if $T \cong A_{n,m}$.  
\end{theorem}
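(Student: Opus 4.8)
The plan is to prove both directions through a single distance-sum estimate driven by a fixed maximum matching; notably this avoids the transformations of Section~2. I would begin by computing the target value at the natural candidate. Let $c$ be the center of $A_{n,m}$. Then $c$ is adjacent to the $n-2m+1$ pendant vertices of the underlying star and to the $m-1$ quasi-pendant vertices that begin the length-two arms, while the $m-1$ leaves of those arms lie at distance $2$, so
\[
s_{A_{n,m}}(c)=(n-2m+1)\cdot 1+(m-1)\cdot 1+(m-1)\cdot 2=n+m-2.
\]
Once the general lower bound $s(T)\ge n+m-2$ is in hand, this computation forces $s(A_{n,m})=n+m-2$ with $c$ attaining the minimum, which settles the ``if'' direction.

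For the lower bound, fix a maximum matching $M=\{a_1b_1,\dots,a_mb_m\}$ and let $x$ be \emph{any} vertex; I will show $s_T(x)\ge n+m-2$ for every $x$, so that the bound holds at the median. Orient each matching edge so that $b_i$ is the endpoint farther from $x$; since $T$ is a tree, $d_T(x,b_i)=d_T(x,a_i)+1$. The key points are that the $2m$ matched vertices are distinct and that $x$ is an endpoint of at most one edge of $M$. If $x$ is $M$-saturated, say $x=a_1$, then the edge $a_1b_1$ contributes $0+1=1$ to $s_T(x)$, each of the remaining $m-1$ edges contributes $d_T(x,a_i)+d_T(x,b_i)\ge 1+2=3$, and each of the $n-2m$ unmatched vertices contributes at least $1$, giving
\[
s_T(x)\ge 1+3(m-1)+(n-2m)=n+m-2.
\]
If instead $x$ is $M$-unsaturated, every matched edge contributes at least $3$ and the remaining $n-2m-1$ non-$x$ vertices contribute at least $1$, yielding the strictly larger bound $s_T(x)\ge 3m+(n-2m-1)=n+m-1$.

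The ``only if'' direction is where the work lies. Suppose $s(T)=n+m-2$ and let $x$ be a median vertex. The unsaturated estimate above shows $x$ must be $M$-saturated, say $x=a_1$, and that every inequality used must be an equality: each unmatched vertex is at distance exactly $1$ from $x$, each $a_i$ with $i\ge 2$ is at distance exactly $1$, each $b_i$ with $i\ge 2$ is at distance exactly $2$, and $b_1$ is at distance $1$; in particular the eccentricity of $x$ is $2$. The main obstacle is converting these pinned distances into the precise pendant-versus-arm structure of $A_{n,m}$. Here I would argue, using that $T$ is a tree and that the edges of $M$ are pairwise disjoint, that any neighbor of $x$ of the wrong type would force a vertex at distance $2$ that is either unmatched (contradicting the distance-$1$ condition) or matched with an already-saturated mate (contradicting disjointness of $M$). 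This shows $b_1$ and every unmatched neighbor of $x$ are pendant vertices, while each $a_i$ ($i\ge 2$) has degree exactly $2$ and carries the single leaf $b_i$. Counting then gives $x$ exactly $n-2m+1$ pendant neighbors and $m-1$ length-two arms, which is precisely the structure of $A_{n,m}$ centered at $x$, completing the characterization.
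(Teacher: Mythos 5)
Your proof is correct, but it takes a genuinely different route from the paper's. The paper proves the lower bound by a double induction (first on $m$ for the perfect-matching case $n=2m$ via deleting the two end vertices of a diametrical path, then on $n$ via deleting a pendant vertex missed by some maximum matching), and each inductive step requires a careful argument, using the branch-weight characterization of the median (Lemma~\ref{L22}), that a median vertex of $T$ can be chosen to remain a median of the smaller tree. Your argument replaces all of this with a single pointwise estimate: fixing a maximum matching $M$ and pairing the two endpoints of each matching edge, you get $s_T(x)\ge n+m-2$ for \emph{every} vertex $x$ (with the stronger bound $n+m-1$ when $x$ is $M$-unsaturated), which is sharper in form, entirely elementary, and needs no induction, no transformations, and no median bookkeeping. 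What the paper's approach buys is uniformity with the rest of Section~2--3 (the same deletion/median machinery is reused for the other theorems); what yours buys is brevity and a very transparent equality analysis, since all the slack is localized at one median vertex. The one place you leave as a sketch --- converting the pinned distances into the structure of $A_{n,m}$ --- does go through: every vertex at distance $2$ from $x$ must be the far endpoint $b_j$ of some matching edge whose near endpoint $a_j$ is a neighbor of $x$, and uniqueness of paths in a tree then forces any non-$x$ neighbor of a neighbor $y$ of $x$ to be some $b_j$ with $y=a_j$; disjointness of the matching edges rules out $y$ being $b_1$ or an unmatched vertex, so those are pendant and each $a_j$ has degree exactly $2$. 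I would only ask you to write that paragraph out explicitly, and to record the (easy) fact that $m(A_{n,m})=m$, which your ``if'' direction tacitly uses when invoking the lower bound for $A_{n,m}$.
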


\begin{proof}
We show the result by induction on $n$.

First, we show the result for $n=2m$ by induction on $m$.  It is obvious that $T\cong P_{2m}\cong A_{2m, m}$ for $m=1,2$ with $s(T)=3m-2$. Suppose that $m\ge 3$ and the result follows for trees of order $2(m-1)$ with matching number $m-1$. Obviously, the diameter of $T$ is at least $4$. Choose a diametrical path  $v_0v_1v_2\dots v_d$ of $T$. As $n=2m$ and  the matching number of $T$ is $m$, we have
$d_T(v_0)=1$ and $d_T(v_1)=2$. By induction hypothesis, $s(T-v_0-v_1)\ge 3(m-1)-2=3m-5$ with equality if and only if $T-v_0-v_1\cong A_{2(m-1), m-1}$.
Let $x$ be a vertex of $T$ in its median. By Lemma~\ref{L22}, $w_T(x)\le m$, and $x$ is in the median of $T-v_0-v_1$ if $w_T(x)\le m-1$ or if $w_T(x)=m$ and $v_0$
belongs to the branch at $x$ with $m$ vertices. Suppose that $w_T(x)=m$ and $v_0$
does not belong to the branch $B$ at $x$ with $m$ vertices. Let $x^*$ be the neighbor of $x$ in $B$. Then $w_T(x^*)=m$ and $v_0$ belongs to the branch at $x^*$ with $m$ vertices.
By Lemma~\ref{L22},  $x^*$ is in the median of $T$ and  $T-v_0-v_1$. Thus, we may assume that  $x$ is in the median of $T$ and  $T-v_0-v_1$.
It follows that
\[
s(T)-s(T-v_0-v_1)=s_T(x)-s_{T-v_0-v_1}(x)=2d_T(x,v_2)+3\ge 3
\]
with equality if and only if $x=v_2$.
Thus
\[
s(T)\ge s(T-v_0-v_1)+3\ge 3m-5+3=3m-2
\]
with equalities if and only if $T-v_0-v_1\cong A_{2(m-1), m-1}$ and $x=v_2$, i.e., $T\cong A_{2m,m}$. This proves the result for $n=2m$.

Now, suppose that $n>2m$ and the result follows for trees of order $n-1$ with matching number $m$. Let $T$ be a tree of  order $n$ with  matching number $m$. As $n>2m$, we have by \cite[Lemma 2.7]{BG} that  there is an matching $M$ with $|M|=m$  and a pendant vertex $z$ such that $M$ does not meet $z$.
Let $yz\in E(T)$. It is evident that $M$ is a matching of $T-z$, and thus the matching number of $T-z$ is $m$. By induction hypothesis,  $s(G-z)\ge n-1+m-2=n+m-3$ with equality if and only if
$T-z\cong A_{n-1,m}$.
Let $x'$ be in the median of  $T$. Then by Lemma~\ref{L22}, $w_{T}(x')\le \frac{n}{2}$, and $x'$ is in the median of $T-z$ if $w_{T}(x')\le \frac{n-1}{2}$ or if $w_{T}(x')= \frac{n}{2}$ and $z$ belongs to the branch of $T$ with $w_T(x')=\frac{n}{2}$ vertices. Suppose that
$w_{T}(x')= \frac{n}{2}$ and $z$ does not belongs to the branch $B'$ of $T$ with $w_T(x')=\frac{n}{2}$ vertices. Then, as above,  the neighbor of $x'$ in $B'$ is in the media of $T$ and $T-z$. So we may assume that $x'$ is in the median of $T$ and $T'$.
Therefore
\[
s(T)-s(T-z)=s_T(x')-s_{T-z}(x')=d_T(x',z)\ge 1
\]
with equality if and only if $x'=y$. Thus
\[
s(T)\ge s(T-z)+1\ge n+m-2
\]
with equalities if and only if $T-z\cong A_{n-1,m}$ and $x'=y$, i.e., $T\cong A_{n,m}$.
\end{proof}

For integers $n$, $p$ and $q$ with $p\ge q\ge 0$ and $p+q+2\le n$, we define a graph called
 a dumbbell, denoted by  $D_n(p,q)$, as the graph formed by attaching $p$ pendant edges to one terminal vertex and $q$ pendant edges to the other terminal vertex of a path $P_{n-p-q}$. If $p+q=n-1$, then we define $D(n, p, q)=S_n$. Obviously, $D_n(1,1)\cong P_n\cong D_n(1,0)$.

\begin{lemma} \label{OL32}
If $p+q+2\le n$ and $p\geq q\geq 2$,  then $s(D_n(p,  q))>s(D_n(p+1,  q-1))$.
\end{lemma}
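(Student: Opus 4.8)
The plan is to compute the minimum status of both dumbbells explicitly and compare. The key structural fact about $D_n(p,q)$ is that, since $p \ge q \ge 2$, both terminal vertices of the central path $P_{n-p-q}$ carry at least two pendant neighbors, so the median of the dumbbell lies on the central path (a vertex on an end of the path cannot be central because one branch contains the entire remainder of the graph, which exceeds $n/2$ when $p \ge q \ge 2$). I would first set $\ell = n-p-q$ for the length (number of vertices) of the central path and label its vertices $w_1, \dots, w_\ell$, with the $p$ pendants attached at $w_1$ and the $q$ pendants attached at $w_\ell$. Then for any candidate median vertex $w_j$ on the path I would write down $s_{D_n(p,q)}(w_j)$ as a sum of three contributions: the distances to the $p$ pendants at $w_1$, the distances to the $q$ pendants at $w_\ell$, and the distances to the interior path vertices.

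The main step is to identify where on the path the minimum status is achieved and to get a clean closed form. Because moving one step along the path from $w_j$ toward $w_1$ changes the status by (number of vertices on the far side toward $w_\ell$) minus (number of vertices on the near side toward $w_1$), the optimal $w_j$ is determined by balancing the two halves — precisely the centroid characterization in Lemma~\ref{L22}, which says the median vertex $x$ satisfies $w_{D_n(p,q)}(x) \le \tfrac n2$. I would use this to pin down the median vertex, then write the minimum status as a function $f(p,q)$ of $p$ and $q$ with $n$ and the path length fixed. Passing from $D_n(p,q)$ to $D_n(p+1,q-1)$ keeps $n$ and the central path unchanged (the path length $\ell = n-p-q$ is invariant under the substitution $p \mapsto p+1,\, q \mapsto q-1$), so the comparison reduces to evaluating the same status formula at the two parameter pairs and checking that shifting weight from the lighter end to the heavier end strictly increases the minimum status.

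I expect the main obstacle to be bookkeeping around the exact location of the median, since the optimal path vertex may shift as the pendant counts change, and one must make sure the comparison uses the correct (possibly different) median vertex for each graph. A cleaner route that sidesteps this is to bound rather than compute exactly: let $x'$ be a median vertex of $D_n(p+1,q-1)$, so that $s(D_n(p+1,q-1)) = s_{D_n(p+1,q-1)}(x')$, and evaluate $s_{D_n(p,q)}(x')$ at the \emph{same} vertex $x'$ to get $s(D_n(p,q)) \le s_{D_n(p,q)}(x')$, or symmetrically start from a median $x$ of $D_n(p,q)$ and use $s(D_n(p+1,q-1)) \le s_{D_n(p+1,q-1)}(x)$. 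Relabelling one pendant at the $q$-end as a pendant at the $p$-end increases that vertex's distance to $x$ by the difference of its distances to the two terminal path vertices; since $p \ge q \ge 2$ forces the median to sit weakly on the $p$-heavy side (so the $p$-terminal is at least as close as the $q$-terminal), this difference is nonnegative, and the strict inequality then follows from a single pendant being strictly reassigned across a path of length at least $\ell-1 \ge 1$. The delicate part is verifying the inequality is \emph{strict}, for which I would track that at least one moved vertex contributes a positive distance change, using $p+q+2 \le n$ to guarantee the central path has length at least two so the two terminal vertices are distinct.
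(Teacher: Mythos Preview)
Your ``cleaner route'' has a genuine gap in the strictness argument. Starting from a median $x$ of $D_n(p,q)$ and bounding $s(D_n(p+1,q-1))\le s_{D_n(p+1,q-1)}(x)$ is the right direction, and moving one pendant from the $q$-terminal $w_\ell$ to the $p$-terminal $w_1$ changes the status at $x$ by $d(x,w_1)-d(x,w_\ell)\le 0$. But ``the two terminal vertices are distinct'' is not what makes this strict: the change is zero whenever $x$ is equidistant from $w_1$ and $w_\ell$. This does occur. When $p=q$ and $\ell=n-2p$ is odd, $D_n(p,q)$ is symmetric and its \emph{unique} median is the path centre $w_{(\ell+1)/2}$, which is equidistant from both terminals (e.g.\ $D_9(2,2)$ has median $w_3$ with $s_{D_9(2,2)}(w_3)=s_{D_9(3,1)}(w_3)=18$, while $s(D_9(3,1))=17$). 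In that case your bound yields only $s(D_n(p+1,q-1))\le s(D_n(p,q))$ with no slack. The other option you mention, starting from a median $x'$ of $D_n(p+1,q-1)$, produces only an \emph{upper} bound $s(D_n(p,q))\le s_{D_n(p,q)}(x')$, which points the wrong way and cannot be combined with the lower bound you need.

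The paper repairs exactly this by not insisting on the same vertex in both graphs. If the median of $T=D_n(p,q)$ is $x_i$ on the path with $i\ge 2$, one checks that $w_{T^*}(x_{i-1})\le w_T(x_i)\le n/2$, so the \emph{shifted} vertex $x_{i-1}$ lies in the median of $T^*=D_n(p+1,q-1)$; a direct count then gives $s_{T^*}(x_{i-1})-s_T(x_i)=q-p-1<0$. (The boundary case $i=1$ is handled separately.) That one-step shift of the comparison vertex is precisely what recovers strictness in the equidistant situation. Your explicit-computation plan could also be made to work, but the shortcut as written does not close the case $p=q$ with $n$ odd.
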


\begin{proof}
Let  $T=D_n(p, q)$.    Let $P=x_1\dots x_{n-p-q}$ be the path of $T$ connecting the two quasi-pendant vertices.  Let  $u$ be a pendant vertex adjacent to $x_{n-p-q}$.  Let  $T^*=T-x_{n-p-q}u+x_1u$.  Obviously,    $T^*\cong D_n(p+1, q-1)$.  Let $x$ be a vertex in the median of $T$.  Obviously, $x$ lies on the path $P$.
If $x=x_1$,  then $x_1$ is also  in the median of $T^*$ as $w_{T^*}(x_1)<w_T(x_1)\leq \frac{n}{2}$.   As we pass from $T$ to $T^*$,  the distance between $x_1$ and  $u$ is decreased by $n-(p+q+1)$, 
and the distance between  $x_1$ and any other vertex remains unchanged.  Thus $s(T)-s(T^*)=n-(p+q+1)>0$, implying $s(T)>s(T^*)$.  If $x=x_i$ with $i\ge 2$,  then $x_{i-1}$ is a median of $T^*$ as $w_{T^*}(x_{i-1})\leq w_T(x_i)\leq\frac{n}{2}$.   Let $B_1$ ($B_2$,  respectively)  be the component of $T-x_{i-1}x_i$ contains $x_i$ ($x_{i-1}$,  respectively).
Obviously, $|V(B_1)|=n-p-i+1$ and $|V(B_2)|=p+i-1$. Note that
\[
d_T(x_i,z)-d_{T^*}(x_{i-1},z)=-1 \mbox{ for } z\in V(B_1) \setminus \{u\},
\]
\[
d_T(x_i,z)-d_{T^*}(x_{i-1},z)=1 \mbox{ for } z\in V(B_2),
\]
%
and
\[
d_T(x_i, u)-d_{T^*} (x_{i-1}, u)=n-p-q+1-i-(i-1)=n-p-q+2-2i.
\]
Thus
\begin{equation*}
\begin{aligned}  s(T^*)-s(T)&=s_{T^*}(x_{i-1})-s_T(x_i)\\&=|V(B_1)|-1-|V(B_2)|-(n-p-q+2-2i)  \\  &=q-p-1\\ &<0,
\end{aligned}
\end{equation*}
i.e.,   $s(T)>s(T^*)$.
\end{proof}

 A quasi-pendant vertex is a vertex that is adjacent to a pendant vertex.  

\begin{theorem}\label{T32}
Let $T$ be a tree  on $n\geq 4$ vertices with  matching number $m$,  where $1 \leq m \leq \lfloor \frac{n}{2}\rfloor$.  Then \[ s(T )\leq m(n-m) \] with equality if and only if $T \cong  D_n(\lceil \frac{n+1}{2} \rceil-m, \lfloor \frac{n+1}{2} \rfloor-m)$.
\end{theorem}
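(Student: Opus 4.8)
The plan is to mirror the inductive scheme of Theorem~\ref{NT1}, but now to push every inequality in the opposite direction and to replace the trivial lower estimates on the shifted distances by \emph{upper} estimates coming from the centroid condition in Lemma~\ref{L22}. Before the induction I would record the value of the bound. Placing a vertex at the central vertex $x_m$ of the spine $P_{2m-1}$ of $D:=D_n(\lceil\frac{n+1}{2}\rceil-m,\lfloor\frac{n+1}{2}\rfloor-m)$, a direct computation gives $s_D(x_m)=m(m-1)+(n+1-2m)m=m(n-m)$, and since the two branch weights at $x_m$ are at most $\frac n2$, Lemma~\ref{L22} shows this value is $s(D)$. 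At the same time Lemma~\ref{OL32}, together with the (short) computation that spine length $2m-1$ beats spine length $2m-2$, shows that $D$ is the unique maximizer of $s$ among all dumbbells of order $n$ with matching number $m$; this is what lets me read off the extremal graph once the induction has forced $T$ to be a dumbbell.

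For the bound I would induct on $n$. The case $n=2m$ (perfect matching, extremal tree $P_{2m}$) I treat by a secondary induction on $m$: on a diametral path $v_0v_1v_2\dots$ one has $d_T(v_0)=1$, $d_T(v_1)=2$, and $T-v_0-v_1$ has order $2(m-1)$ and matching number $m-1$. Choosing, exactly as in Theorem~\ref{NT1}, a vertex $x$ lying in the median of both $T$ and $T-v_0-v_1$, I get $s(T)-s(T-v_0-v_1)=2d_T(x,v_2)+3$. The key point is that the pendant path from $x$ to $v_0$ lies inside a single branch $B$ at $x$, so $d_T(x,v_0)=d_T(x,v_2)+2\le |V(B)|\le w_T(x)\le \frac n2=m$; hence $d_T(x,v_2)\le m-2$ and $s(T)\le (m-1)^2+(2m-1)=m^2$, with equality forcing $T-v_0-v_1\cong P_{2m-2}$ and $B$ to be exactly that path, i.e. $T\cong P_{2m}$.

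For $n>2m$ I would reduce the order by one while keeping the matching number: by \cite[Lemma 2.7]{BG} there is a pendant vertex $z$ missed by a maximum matching, so $m(T-z)=m$, and by induction $s(T-z)\le m(n-1-m)$. Taking $x$ in the median of both $T$ and $T-z$ gives $s(T)=s(T-z)+d_T(x,z)$, so everything comes down to the inequality $d_T(x,z)\le m$, after which $s(T)\le m(n-1-m)+m=m(n-m)$. The equality discussion then forces $T-z$ to be the extremal dumbbell of order $n-1$ and $z$ to be attached so as to keep the spine length $2m-1$ and balance the two brooms, which by the first paragraph pins down $T\cong D_n(\lceil\frac{n+1}{2}\rceil-m,\lfloor\frac{n+1}{2}\rfloor-m)$.

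The main obstacle is precisely the estimate $d_T(x,z)\le m$ in the case $n>2m$. Unlike the case $n=2m$, the naive branch-weight bound only gives $d_T(x,z)\le w_T(x)\le \frac n2$, which is useless once $n\gg 2m$; moreover the estimate is genuinely false for an \emph{arbitrary} missed pendant, since a missed leaf may sit at the far end of a long pendant path, at distance larger than $m$ from the median. The resolution is to choose $z$ carefully --- for instance a missed pendant nearest to the median --- and to combine Lemma~\ref{L22} with the matching structure (the Bollob\'as--Cockayne/\cite{BG} selection of the maximum matching) to guarantee that such a $z$ can always be taken with $d_T(x,z)\le m$, the dumbbell $D$ showing that this bound is sharp. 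I expect this distance estimate relating the centroid to a suitably chosen unsaturated pendant, along with the bookkeeping needed to make $x$ a common median, to be the technical heart of the argument; the remaining equality analysis is then routine given Lemma~\ref{OL32}.
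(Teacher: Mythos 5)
Your proposal takes a genuinely different route from the paper: the paper does not induct at all, but instead takes a maximizer $T$, uses Propositions~\ref{L23} and~\ref{L26} to show it has exactly two quasi-pendant vertices (hence $T\cong D_n(p,q)$), then uses Proposition~\ref{L23} again to force the spine to have $2m-1$ vertices and Lemma~\ref{OL32} to balance $p$ and $q$. Your base case $n=2m$ is fine as written; the observation that $d_T(x,v_0)\le |V(B)|\le w_T(x)\le \frac n2=m$ correctly converts the paper's lower-bound identity $s(T)-s(T-v_0-v_1)=2d_T(x,v_2)+3$ into the upper bound $s(T)\le m^2$.

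However, the inductive step $n>2m$ has a genuine gap, and you have named it yourself without closing it: the entire argument rests on producing a maximum matching $M$, an $M$-unsaturated pendant vertex $z$, and a vertex $x$ lying in the median of both $T$ and $T-z$, with $d_T(x,z)\le m$. The citation of \cite[Lemma 2.7]{BG} only yields \emph{some} unsaturated pendant vertex; as you observe, an arbitrary such vertex can lie at distance up to roughly $2m$ from the median (a path of length $d$ only forces $m\ge\lceil d/2\rceil$), so the naive branch-weight bound $d_T(x,z)\le w_T(x)\le\frac n2$ is useless when $n\gg 2m$. Your proposed fix --- take a missed pendant ``nearest to the median'' --- is a conjecture, not a proof: establishing it requires controlling \emph{which} pendant vertices are missable by some maximum matching (essentially the Gallai--Edmonds set $D(T)$, which in a tree is an independent set) and relating that set to the centroid condition of Lemma~\ref{L22}. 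No such argument is sketched, and it is not routine; indeed there are trees (e.g.\ a median vertex carrying several pendant edges together with one long odd pendant path) in which some missable pendant vertices sit at distance greater than $m$ from the median, so the statement is only saved by a careful choice of $z$, which must be justified. Until that estimate is proved, the inequality $s(T)\le m(n-1-m)+d_T(x,z)\le m(n-m)$ does not follow, and the subsequent equality analysis (which additionally needs $d_T(x,z)=m$ exactly and a common median for $T$ and $T-z$) is moot. I would recommend either proving the distance estimate as a standalone lemma or switching to the paper's transformation-based argument, which sidesteps the issue entirely by never deleting a vertex.
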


\begin{proof}
Note that  the diameter $D_n\left(\left\lceil \frac{n+1}{2} \right\rceil-m, \left\lfloor \frac{n+1}{2} \right\rfloor-m\right)$ is $2m$  and  by Lemma~\ref{L22}, its center (the vertex of distance $m$ from a pendant vertex) is in its median. Thus,  we have
\begin{eqnarray*}
s\left(D_n\left(\left\lceil \frac{n+1}{2} \right\rceil-m, \left\lfloor \frac{n+1}{2} \right\rfloor-m\right)\right)
&=& 2\sum_{i=1}^{m-1} i+(n-2m+1)m\\
&=& m(n-m).
\end{eqnarray*}

Let $T$ be a tree  with order $n$ and  matching number $m$ such that its minimum status is as large as possible. From the value of the minimum status of $D_n(\lceil \frac{n+1}{2} \rceil-m, \lfloor \frac{n+1}{2} \rfloor-m)$,
 it suffices to show that $T \cong  D_n(\lceil \frac{n+1}{2} \rceil-m, \lfloor \frac{n+1}{2} \rfloor-m)$.

It is trivial for $m=1$.  Suppose that $m \geq 2$.   Let $\alpha$ be the number of  quasi-pendant vertices in $T$. As $m\ge 2$, the diameter of $T$ is at least $3$, and thus $\alpha\ge 2$.

We claim that $\alpha=2$.  Otherwise,  $\alpha\ge 3$,  there are at least three branches, say $B_v$, $B_w$ and $B_z$, at some vertex $u$,  and at least two of them  are nontrivial, where $v,w,z\in N_T(u)$.  Assume that $|V(B_v)|\ge |V(B_w)|\ge |V(B_z)|$.
Obviously, $B_v$ and $B_w$ are non-trivial.
Let $M$ be a matching of $T$ with $|M|=m$. 
Suppose that $uz\notin M$.  Let $u_1$ be a quasi-pendant vertex of $B_w$.  Then  $T'=T-uz+u_1z$ is a tree on $n$ vertices with matching number $m$.  By Proposition~\ref{L26},  $s(T')>s(T)$,  a contradiction.
Thus,  $uz\in M$, implying $uv, uw\notin M$.
If $B_z$ is nontrivial, then by reversing the roles of $w$ and $z$ as above,  we have a contradiction.
So  $B_z$ is trivial.  Let  $T''=T-uv+zv$. Then $m(T'')=m$,  and  by Proposition~\ref{L23} or \ref{L26},  we have $s(T'')>s(T)$,  a contradiction.
Therefore, we have $\alpha=2$, as claimed.
It follows that $T \cong D_n(p,  q)$  for some $p$ and $q$ with $p\ge q\ge 1$.  If $p=1$,  then $T\cong P_n$, $m=\lfloor \frac{n}{2}\rfloor$, and thus $T\cong  D_n(1,1)\cong D_n(1,0)$.

Suppose $p>1$. Assume that $T=D_n(p,q)$. Let $P_d=v_1\dots v_{d}$ be a diametrical path of $T$ with $d_T(v_2)=p+1$, where $d=n-p-q+2$. Note that $m=\lfloor\frac{d}{2}\rfloor$,  we have $d=2m, 2m+1$. If $d=2m$, then the matching number of $D_n(p-1,q)$ is $m$, and
 by Proposition~\ref{L23}, we have $s(D_n(p-1,q))>s(D_n(p,q))=s(T)$, a contradiction. Thus, $d=2m+1$. Then $2m+1=n-p-q+2$,  i.e., $p+q=n+1-2m$.
  By  Lemma \ref{OL32},  we have $T \cong  D_n(\lceil \frac{n+1}{2} \rceil-m, \lfloor \frac{n+1}{2} \rfloor-m)$.
\end{proof}

Let $G$ be a connected graph on $n$ vertices with matching number $m$,  where $1\leq m \leq \lfloor \frac{n}{2}\rfloor$.  Obviously, $s(G)\le s(T)$ for a spanning tree $T$ of $G$. Thus, by Theorem~\ref{T32},
 \[ s(G )\leq m(n-m)  \] with equality if $G \cong  D_n(\lceil \frac{n+1}{2} \rceil-m, \lfloor \frac{n+1}{2} \rfloor-m)$.

%

\section{Minimum status and domination number}

In this section,  we find sharp lower and upper bounds for the minimum status of a tree with fixed order and domination number, and characterize the trees attaining these bounds. The upper bounds may be extended for connected graphs.

\begin{theorem}
Let $T$ be a tree with order $n$ and domination number $\gamma$,  where $1 \leq \gamma\leq \lfloor \frac{n}{2}\rfloor$.  Then\[s(T) \geq n+\gamma-2 \] with equality if and only if $T\cong A_{n,\gamma}$.
\end{theorem}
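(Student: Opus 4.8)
The plan is to reduce this theorem to Theorem~\ref{NT1}, the analogue in terms of matching number, by exploiting the relationship between domination number and matching number recorded in the introduction. Since $T$ is a connected tree it has no isolated vertices, so by the result of Bollob\'as and Cockayne~\cite{B} we have $m(T)\ge \gamma(T)=\gamma$. Writing $m=m(T)$, Theorem~\ref{NT1} gives $s(T)\ge n+m-2\ge n+\gamma-2$, which is exactly the claimed lower bound. Thus the inequality itself requires no work beyond combining these two cited facts.

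For the equality characterization I would first treat the forward direction. Suppose $s(T)=n+\gamma-2$. Then the chain $n+\gamma-2=s(T)\ge n+m-2\ge n+\gamma-2$ must collapse to equalities throughout. The second inequality forces $m=\gamma$, while the first inequality, together with the equality case of Theorem~\ref{NT1}, forces $T\cong A_{n,m}=A_{n,\gamma}$. This settles the ``only if'' part cleanly, and no new computation is needed here since $\gamma(T)=\gamma$ is given by hypothesis.

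For the reverse direction I must confirm that $A_{n,\gamma}$ is genuinely a tree of order $n$ with domination number $\gamma$ whose minimum status meets the bound. Its minimum status equals $n+\gamma-2$ directly from the equality case of Theorem~\ref{NT1}, since $A_{n,\gamma}$ is precisely the extremal tree there with matching number $\gamma$ (the $\gamma-1$ attached pendant edges together with one edge from the centre to a remaining central leaf form a maximum matching of size $\gamma$). It then remains to verify $\gamma(A_{n,\gamma})=\gamma$. For the upper bound, the centre together with the $\gamma-1$ non-central vertices carrying the attached pendant edges is a dominating set of size $\gamma$. For the lower bound, each of the $\gamma-1$ outer pendant vertices has a private support vertex, so any dominating set $D$ must contain, for each such pendant, either it or its support, contributing at least $\gamma-1$ vertices drawn from $\gamma-1$ disjoint pairs. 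A short case split on whether the centre lies in $D$ finishes it: if the centre is in $D$ then $|D|\ge 1+(\gamma-1)=\gamma$; if not, every central leaf (there are $n-2\gamma+1\ge 1$ of them, using $n\ge 2\gamma$) must itself lie in $D$, forcing $|D|\ge (n-2\gamma+1)+(\gamma-1)=n-\gamma\ge \gamma$.

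The reductions to Theorem~\ref{NT1} and to Bollob\'as--Cockayne carry essentially no difficulty; the only step requiring genuine care is the exact evaluation of $\gamma(A_{n,\gamma})$, and within that the lower bound, where one must rule out any saving coming from the shared support vertex (the centre) of the central leaves. I expect this modest case analysis, distinguishing whether the centre is placed in the dominating set, to be the main obstacle.
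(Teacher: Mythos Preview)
Your proposal is correct and follows essentially the same route as the paper: both reduce to Theorem~\ref{NT1} via the Bollob\'as--Cockayne inequality $m(T)\ge\gamma(T)$, and both read off the equality case from the chain $s(T)\ge n+m(T)-2\ge n+\gamma-2$. The only cosmetic difference is that the paper phrases the second inequality as $s(A_{n,m(T)})\ge s(A_{n,\gamma})$ and cites Proposition~\ref{L23} for it, whereas you simply use the explicit value $s(A_{n,m})=n+m-2$; also, your verification that $\gamma(A_{n,\gamma})=\gamma$ is more detailed than the paper's one-line assertion of this fact.
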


\begin{proof}
Note that $m(T) \geq \gamma$ and $\gamma(A_{n,\gamma}) = \gamma$.  By Theorem \ref{NT1} and Proposition~\ref{L23},  we have $s(T) \geq s\left(A_{n, m(T)}\right) \geq s(A_{n,\gamma})=n+\gamma-2$ with equalities if and only if $T\cong A_{n, m(T)}$ and $m(T) =\gamma$,  i.e.,   $T \cong A_{n,\gamma}$.
\end{proof}

The proximity $\pi= \pi(G)$ of a connected graph $G = (V ,  E)$ is the minimum,  over all vertices,  of the average distance from a vertex to all others.   Obviously,  $s(G)=(n-1)\pi$.


In \cite{AH}, an upper bound for the proximity was given, which is restated as below:
Let $G$ be a graph of order $n\ge 3$.  Then $s(G)\leq \lfloor\frac{n^2}{4}\rfloor$
with equality if and only if G is either the cycle $C_n$ or the path $P_n $. Observe that  $\gamma(C_n)= \gamma(P_n)= \lceil \frac {n}{3}\rceil$.

In the rest of this section,  for trees with order $n$ and  domination number $\gamma$,   we consider  $1  \leq \gamma < \lceil \frac {n}{3}\rceil$  and $\lceil \frac {n}{3}\rceil <\gamma \leq \lfloor \frac {n}{2}\rfloor$,  separately.

\begin{theorem}\label{T42}
Let $T$ be a tree with order $n$ and domination number $\gamma$,  where $1  \leq \gamma < \lceil \frac {n}{3}\rceil$.  Then
\[s(T)\leq
\begin{cases}
\frac{3\gamma-1}{2}\left(n-\frac{3\gamma-1}{2}\right)   &\text{if $\gamma$ is odd }\\[2mm]
 \frac{3\gamma}{2}\left(n+1-\frac{3\gamma}{2}\right)-\left\lceil\frac{n}{2}\right\rceil &\text{if $\gamma$ is even}
 \end{cases}
 \]
with equality if and only if $T\cong D_n(\lceil \frac {n-3\gamma+2}{2}\rceil,  \lfloor \frac {n-3\gamma+2}{2}\rfloor)$.
\end{theorem}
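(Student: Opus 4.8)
The plan is to follow the scheme of the proof of Theorem~\ref{T32}, replacing the matching number by the domination number. The device that makes this work is the classical inequality $\gamma(T)\ge\lceil\frac{\mathrm{diam}(T)+1}{3}\rceil$, i.e. $\mathrm{diam}(T)\le 3\gamma-1$ whenever $\gamma(T)=\gamma$. Writing $\mathcal{F}$ for the set of trees of order $n$ with diameter at most $3\gamma-1$, every tree with $\gamma(T)=\gamma$ lies in $\mathcal{F}$, so it suffices to prove that the minimum status is maximized over $\mathcal{F}$ uniquely by $D^{*}:=D_n\!\left(\lceil\frac{n-3\gamma+2}{2}\rceil,\lfloor\frac{n-3\gamma+2}{2}\rfloor\right)$ and then to note that $\gamma(D^{*})=\gamma$, so that $D^{*}$ already belongs to the smaller family $\{T:\gamma(T)=\gamma\}\subseteq\mathcal{F}$.

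I would first check the value at $D^{*}$. Its spine $x_1\cdots x_{k}$ has $k=3\gamma-2$ vertices; since the two ends $x_1,x_k$ must lie in any efficient dominating set and the remaining middle path has $3(\gamma-2)$ vertices, $\gamma(D^{*})=2+(\gamma-2)=\gamma$. For the status, Lemma~\ref{L22} places the median at the centre of the spine. The two cases of the theorem are exactly the parity of $k=3\gamma-2$: when $\gamma$ is odd, $k$ is odd, the central vertex $x_{(3\gamma-1)/2}$ is equidistant, at distance $c=\frac{3\gamma-1}{2}$, from both bushes, and a direct count gives $s(D^{*})=c(n-c)$; when $\gamma$ is even, $k$ is even, the median is the spine vertex nearer the heavier bush, and the resulting asymmetry produces the correction $-\lceil\frac{n}{2}\rceil$. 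This step is a routine distance computation ($\gamma=1$ being the degenerate star case, handled directly).

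Next I would show $D^{*}$ is the unique maximizer of the minimum status over $\mathcal{F}$. Let $T$ attain the maximum in $\mathcal{F}$; note that the bushes of the target $D^{*}$ are non-empty, since $p+q=n-3\gamma+2\ge3$ as $\gamma<\lceil\frac{n}{3}\rceil$. If $T$ has a vertex carrying three branches with at least two of them non-trivial, then relocating one branch by Proposition~\ref{L26}, or contracting a non-pendant cut edge by Proposition~\ref{L23}, strictly increases the minimum status; choosing these moves so as not to enlarge the diameter keeps us inside $\mathcal{F}$, and iterating shows the maximizer is a dumbbell $D_n(p,q)$ with $p\ge q\ge 1$. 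If its spine has fewer than $3\gamma-2$ vertices, moving one pendant of the larger bush into the spine is the reverse of Proposition~\ref{L23}, hence strictly increases the minimum status while keeping the diameter at most $3\gamma-1$; so the spine of the maximizer has exactly $3\gamma-2$ vertices (longer spines are excluded because, with two non-empty bushes, a spine of length $3\gamma-1$ would have diameter $3\gamma$). Finally Lemma~\ref{OL32} forces the two bushes to be balanced, giving $T\cong D^{*}$.

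The main obstacle is the diameter bookkeeping for the transformations of Section~2, which is where the domination number is really being controlled. Unlike the matching number in Theorem~\ref{T32}, $\gamma$ is not directly preserved by Propositions~\ref{L23} and~\ref{L26}; passing to the diameter-bounded family $\mathcal{F}$ converts this into the cleaner requirement that each relocation not push the diameter past $3\gamma-1$. I would spend the most care verifying this for the spine-lengthening step and for the step that gathers off-end vertices into the two bushes, and in checking that attaching the balanced bushes to a spine of exactly $3\gamma-2$ vertices yields diameter precisely $3\gamma-1$, so that $D^{*}\in\mathcal{F}$ with $\gamma(D^{*})=\gamma$ and the equality case is pinned down uniquely.
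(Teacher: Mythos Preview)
Your overall strategy---relax the family $\{T:\gamma(T)=\gamma\}$ to the diameter-bounded family $\mathcal{F}=\{T:\operatorname{diam}(T)\le 3\gamma-1\}$ and then argue that the maximizer over $\mathcal{F}$ is the balanced dumbbell---is different from the paper's, which instead relaxes to $\{T:\gamma(T)\le\gamma\}$ and carefully shows in its Claim~1 that a branch relocation \emph{preserving the domination bound} is always available. Your route would be cleaner if it worked, but the step ``choosing these moves so as not to enlarge the diameter keeps us inside $\mathcal{F}$'' is a genuine gap, not a routine check. Consider a spider with three legs each of length $a\ge 4$ with $a\equiv 1\pmod 3$, so $n=3a+1$, $\operatorname{diam}=2a=3\gamma-1$ with $\gamma=\frac{2a+1}{3}<\lceil n/3\rceil$. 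At the centre $u$ the three branches have equal size, so Proposition~\ref{L26} forces you to move one leg to some $w$ in another leg; whatever $w$ you pick, the height of the moved leg above $u$ becomes at least $a+1$, and the new diameter is at least $2a+1$, pushing you out of $\mathcal{F}$. No application of Proposition~\ref{L23} helps either: contracting \emph{decreases} the minimum status (your sentence ``contracting a non-pendant cut edge by Proposition~\ref{L23} strictly increases the minimum status'' has the inequality the wrong way round), while its reverse---subdividing a pendant edge---would also raise the diameter past $3\gamma-1$ here. So for this tree there is no $s$-increasing move that stays in $\mathcal{F}$, and your iteration argument cannot conclude that the maximizer is a dumbbell.

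The paper sidesteps this obstruction entirely: working inside $\{T:\gamma(T)\le\gamma\}$, it fixes a $\gamma(T)$-set $S$ and, depending on whether the branching vertex $u$ lies in $S$, selects which branch to relocate so that $S$ remains a dominating set of the new tree; this keeps $\gamma\le\gamma$ without ever tracking the diameter. Only after reducing to a dumbbell does the paper read off the spine length from $\gamma$. If you want to keep your diameter-bounded framework, you would need an independent argument (not via Propositions~\ref{L23} and~\ref{L26} alone) that the maximum of $s$ over all $n$-vertex trees of diameter at most $d$ is attained uniquely at the balanced dumbbell on a spine of $d-1$ vertices; this is plausible but is a separate lemma that your sketch does not supply.
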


\begin{proof} Let $H=D_n(\lceil \frac {n-3\gamma+2}{2}\rceil,  \lfloor \frac {n-3\gamma+2}{2}\rfloor)$. Obviously, the diameter of $H$ is $3\gamma-1$.  Let $uv$ be a pendant edge with $d_H(v)=\lceil \frac {n-3\gamma+2}{2}\rceil+1$. Let $x$ be a vertex of $H$ with $d_H(x,u)=\lfloor\frac{3\gamma-1}{2}\rfloor$. Obviously, $x$ is in the median of $H$ by Lemma~\ref{L22}.
If $\gamma$ is odd, then
\begin{eqnarray*}
s(H) &= & s_H(x)\\
&=&s(P_{3\gamma})+\frac{3\gamma-1}{2}\cdot (n-3\gamma)\\
&=&\frac{(3\gamma)^2-1}{4}+\frac{3\gamma-1}{2}\cdot (n-3\gamma)\\
&=&\frac{3\gamma-1}{2}\left(n-\frac{3\gamma-1}{2}\right).
\end{eqnarray*}
If $\gamma$ is even, then
\begin{eqnarray*}
s(H)&=& s(P_{3\gamma})+\left(\frac{3\gamma}{2}-1\right)\left\lceil \frac{n-3\gamma}{2} \right\rceil   +\frac{3\gamma}{2}\cdot \left\lfloor\frac{n-3\gamma}{2}\right\rfloor\\
&=&\frac{(3\gamma)^2}{4}+\frac{3\gamma}{2}\cdot (n-3\gamma)-\left\lceil \frac{n-3\gamma}{2} \right\rceil \\
    &=& \frac{3\gamma}{2}\left(n+1-\frac{3\gamma}{2}\right)-\left\lceil\frac{n}{2}\right\rceil.
\end{eqnarray*}
That is, we have
\begin{eqnarray*}
&&s\left(D_n\left(\left\lceil \frac {n-3\gamma+2}{2}\right\rceil,  \left\lfloor \frac {n-3\gamma+2}{2}\right\rfloor\right)\right)\\
&=&\begin{cases}
\frac{3\gamma-1}{2}\left(n-\frac{3\gamma-1}{2}\right)   &\text{if $\gamma$ is odd, }\\[2mm]
 \frac{3\gamma}{2}\left(n+1-\frac{3\gamma}{2}\right)-\left\lceil\frac{n}{2}\right\rceil &\text{if $\gamma$ is even.}
 \end{cases}
\end{eqnarray*}

Let $T$ be a tree with order $n$ and  domination number at most $\gamma$ such that its minimum status is as large as possible. By the value of $s\left(D_n(\lceil \frac {n-3\gamma+2}{2}\rceil,  \lfloor \frac {n-3\gamma+2}{2}\rfloor)\right)$, we only need to show that
$T\cong D_n(\lceil \frac {n-3\gamma+2}{2}\rceil,  \lfloor \frac {n-3\gamma+2}{2}\rfloor)$.

It is trivial if $\gamma=1$. Suppose that $\gamma\ge 2$.

\noindent
{\bf Claim 1.} $T$ has exactly two quasi-pendant vertices.

Otherwise,  there are at least three branches at some vertex $u$ of $T$, and  at least two of them, say $B_x$ and $B_y$, are nontrivial, where $x,  y \in N_T(u)$.
Suppose without loss of generality that $|V(B_x)| \geq |V(B_y)|$.
Let $S$ be a $\gamma(T)$-set.  Suppose  that $u \in S$.
Let $w \in V(B_y) \cap S$.  For a vertex $z \in N_T(u)$ with $z \neq x,  y$,  let $T'= T-uz+wz$. Obviously, $T'$ is a tree.  As $w \in S$, $S$ is a dominating set of $T'$, implying  $\gamma(T')\leq |S|=\gamma(T) \leq \gamma$.   By Proposition~\ref{L26}, $s(T'') > s(T)$,  a contradiction.
It follows that  $u \notin S$. Then there is a vertex $x'\in N_T (u) \cap S$.  Let $\{y',  z'\}\subseteq N_T (u) \setminus \{x'\}$.  Let $T''= T-uz' +y'z'$  if $|V(B_{x'})| \geq |V(B_{y'})|$,  and $T''= T-uz'+ x'z'$ if $|V(B_{x'})| < |V(B_{y'})|$.  Then $T''$ is a tree with $\gamma(T')\leq \gamma(T)\leq \gamma$.  By Proposition~\ref{L26},  $s(T'') > s(T)$, also a contradiction.
Therefore,  $T$ has exactly two quasi-pendant vertices, as claimed.

By Claim 1,  $T\cong D_n(p,q)$ for some $p$ and $q$ with $p\geq q\geq 1$.  As $\gamma(T)\le \gamma<\lceil \frac {n}{3}\rceil$,  we have $p \geq 2$.

\noindent
{\bf Claim 2.} $\gamma(T)=\gamma$.

Otherwise, we have  $\gamma(T) < \gamma$.  Let $T^*= D_n(p-1,  q)$ if $p-1\geq q$ and $T^*=D_n(p,  q-1)$ if $p=q$. Evidently,  $\gamma(T^*) \leq \gamma(T) + 1 \leq \gamma$. By Proposition~\ref{L23},  we have  $s(T) < s(T^*)$,  a contradiction.  Hence  $\gamma(T)=\gamma$, as claimed.

As $T\cong D_n(p,q)$ and  $\gamma(T)=\gamma$ by Claim 2, we have  $\lceil\frac{n-p-q+2}{3}\rceil=\gamma$, i.e., $n-p-q = 3\gamma -2$,  $3\gamma -3$,   $3\gamma -4$.

\noindent
{\bf Claim 3.} $n-p-q=3\gamma-2$.

Otherwise,
we have  $n-p-q= 3\gamma -3, 3\gamma -4$.  Let $\widehat{T} = D_n(p-1,  q)$  if $p-1\geq q$ and $\widehat{T}= D_n(p,  q-1)$ for $p=q$.   Obviously, $\gamma(\widehat{T})=\lceil\frac{n-p-q+3}{3}\rceil =\gamma$.   By Proposition~\ref{L23}, $s(T) < s(\widehat{T})$,  a contradiction.

By Claim 3, we have $p+q = n-3\gamma + 2$. Now by Lemma \ref{OL32},  we have $T\cong D_n\left(\lceil \frac {n-3\gamma+2}{2}\rceil,  \lfloor \frac {n-3\gamma+2}{2}\rfloor\right)$.
\end{proof}

Let $G$ be a connected graph with order $n$ and domination number $\gamma$ ,  where $1  \leq \gamma < \lceil \frac {n}{3}\rceil$.  Then
\[s(G)\leq
\begin{cases}
\frac{3\gamma-1}{2}\left(n-\frac{3\gamma-1}{2}\right)   &\text{if $\gamma$ is odd }\\[2mm]
 \frac{3\gamma}{2}\left(n+1-\frac{3\gamma}{2}\right)-\left\lceil\frac{n}{2}\right\rceil &\text{if $\gamma$ is even}
 \end{cases}
 \]
with equality if  $G\cong D_n(\lceil \frac {n-3\gamma+2}{2}\rceil,  \lfloor \frac {n-3\gamma+2}{2}\rfloor)$.


A caterpillar is a tree in which removal of all pendant vertices gives a path.
Let $C_n(p,q)$ be the  caterpillar  obtained by attaching a pendant vertex $v_i'$ to $v_i$ of the path $P_{n-p-q} =v_1v_2\dots v_{n-p-q}$ for $i = 1,   \dots,  p,  n-p-2q+1,  \dots ,  n-p-q$,  where $p\geq q \geq 1$ and $2(p+q) \leq n$.

\begin{lemma} \label{OL41} If $p\geq q+2$ and $2(p + q) < n$,  then $s(C_n( p-1,  q+1)) > s(C_n(p,q))$.
\end{lemma}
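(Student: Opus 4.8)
The plan is to view $T':=C_n(p-1,q+1)$ as obtained from $T:=C_n(p,q)$ by moving a single pendant vertex along the spine, and to compare minimum statuses through one distance identity. Write $L=n-p-q$ and let $v_1\cdots v_L$ be the spine. Reading off the pendant positions, $T$ and $T'$ coincide except that the pendant attached at $v_p$ in $T$ is attached at $v_{L-q}$ in $T'$; since $2(p+q)<n$ we have $p<L-q$, so this is a genuine rightward move. Hence, for any spine vertex $v_j$ the only distance that changes is the one to this moved pendant, giving
\[
s_{T'}(v_j)-s_T(v_j)=|j-(L-q)|-|j-p|=:f(j).
\]

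First I would reduce to the spine and record the driving inequality. A pendant vertex has status exactly $n-2$ more than its neighbour, so every median vertex of $T'$ lies on the spine; fix such a vertex $v_b$, so that $s(T')=s_{T'}(v_b)$. Since trivially $s(T)\le s_T(v_b)=s_{T'}(v_b)-f(b)$, we obtain
\[
s(T')-s(T)\ge f(b).
\]
It therefore suffices to show $f(b)\ge 0$, with strictness, i.e.\ that $v_b$ is at least as close on the spine to $v_p$ as to $v_{L-q}$. This is believable because $T'$ carries $p-1\ge q+1$ pendants near $v_1$, so its centroid lies on that heavier side.

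Next I would locate the median of $T'$ via Lemma~\ref{L22}. If $b\le p$ then $f(b)=n-2(p+q)>0$ and we are done. Otherwise $v_b$ lies in the pendant-free middle stretch of the spine (the bound $n>4q+4$, a consequence of $2(p+q)<n$ and $p\ge q+2$, keeps $b$ to the left of the pendants near $v_L$), its left branch has $b+p-2$ vertices, and the condition $w_{T'}(v_b)\le \tfrac n2$ forces $b\le \tfrac n2-p+2$; this yields $f(b)=n-2q-2b\ge 2(p-q-2)\ge 0$. When $p\ge q+3$ this is strict. The only borderline situation is $p=q+2$, where $f(b)$ can vanish, and only when $b=\tfrac n2-q$ (so $n$ is even). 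There I would argue directly: then $s(T')=s_T(v_{n/2-q})$, while every median vertex of $T$ sits at spine position at most $\tfrac n2-p+1<\tfrac n2-q$, so $v_{n/2-q}$ is off the median of $T$ and $s_T(v_{n/2-q})>s(T)$. Combining the cases gives $s(T')>s(T)$.

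The step I expect to be the real obstacle is this last localization: certifying $f(b)>0$ requires pinning the median of $T'$ down to within one spine-position, and precisely at the balance position $\tfrac n2-q$ the quantity $f$ changes sign, so the parity boundary $p=q+2$ with $n$ even must be handled by hand. The hypothesis $2(p+q)<n$ is exactly what is needed twice over: it makes the pendant move nontrivial ($p<L-q$), and it pushes the median of $T'$ strictly onto the heavier side, which is what finally delivers the strict inequality.
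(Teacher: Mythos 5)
Your argument is correct and follows essentially the same route as the paper: both view $C_n(p-1,q+1)$ and $C_n(p,q)$ as differing by the relocation of a single pendant vertex, evaluate the resulting status change at a median vertex of $C_n(p-1,q+1)$ located via Lemma~\ref{L22}, and conclude from $s(C_n(p,q))\le s_{C_n(p,q)}(v_b)$. The only divergence is that the paper exhibits an explicit median position and obtains strict positivity in one stroke, whereas you bound the position by the branch-weight inequality and must dispose of the borderline case $p=q+2$ with $n$ even separately --- which you do correctly.
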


\begin{proof}
Let $T=C_n(p- 1,  q+1)$, as labelled above. Let  $T'=T-v_{n-p-2q}v_{n-p-2q}'+v_pv_{n-p-2q}'$. Obviously,  $T'\cong C_n(p,  q)$.

\noindent
{\bf Case 1.} $2(p-1)<\lceil\frac{n}{2}\rceil$.

Let  $x=v_{\lceil \frac{n}{2}\rceil-p+1}$, then by Lemma~\ref{L22}, $x$  is in the median of $T$ as $w_T(x)\leq \frac{n}{2}$.  As we pass $T$ to $T'$,  the distance between $x$ and $v_{n-p-2q}'$ is decreased by  $[n-p-2q-(\lceil \frac{n}{2}\rceil-p+1)+1]-(\lceil \frac{n}{2}\rceil-p+1-p+1)=2(p-q-1)+(n-2\lceil \frac{n}{2}\rceil)$,  and the distance between the $x$ and any other vertex  remains unchanged.   Thus $s(T)-s(T')\ge s_T(x)-s_{T'}(x)=2(p-q-1)+(n-2\lceil \frac{n}{2}\rceil)>0$, implying $s(T)>s(T')$.

\noindent
{\bf Case 2.} $2(p-1)\ge\lceil\frac{n}{2}\rceil$.

Let  $x=v_{\lceil \frac{n}{4}\rceil}$, then by Lemma~\ref{L22}, $x$  is in the median of $T$ as $w_T(x)\leq \frac{n}{2}$.  As we pass $T$ to $T'$,  the distance between $x$ and $v_{n-p-2q}'$ is decreased  by  $n-p-2q+1-\lceil \frac{n}{4}\rceil -\left(p+1-\lceil \frac{n}{4}\rceil\right)=n-2(p+q)$,  and the distance between  $x$ and any other vertex  remains unchanged.   Thus $s(T)-s(T')\ge s_T(x)-s_{T'}(x)=n-2(p+q)>0$,  implying $s(T)>s(T')$.
\end{proof}

\begin{lemma} \label{OL42}
Let $T$ be a caterpillar on $n$ vertices with $r$ pendant vertices,  where $2 \leq r \leq  \lfloor \frac {n}{2}\rfloor$ and each vertex of  $T$ is adjacent to at most one pendant vertex.  Then  $s(T) \leq s(C_n(\lceil \frac {r}{2}\rceil, \lfloor \frac {r}{2}\rfloor ))$ with equality if and only if   $T \cong C_n(\lceil \frac {r}{2}\rceil, \lfloor \frac {r}{2}\rfloor )$.
\end{lemma}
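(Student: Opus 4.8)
The plan is to describe the admissible class precisely, take a member $T$ of maximum minimum status, and force it to be the balanced comb through two transformation steps.

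First I would record the shape of an admissible $T$. Deleting the $r$ pendant vertices leaves a path $Q=q_1q_2\cdots q_{n-r}$ (the spine), and since each spine vertex carries at most one pendant, the $r$ pendants sit on $r$ distinct spine vertices; moreover the endpoints $q_1,q_{n-r}$ must carry a pendant (otherwise they would themselves be pendant and would have been deleted). Thus an admissible $T$ is determined by its set of leafless (``bare'') spine vertices, and one checks directly that $T\cong C_n(p,q)$ for some $p\geq q\geq 1$ with $p+q=r$ exactly when the bare vertices occupy a single contiguous block of the spine (a bare block $\{a,\dots,b\}$ leaves pendants on the two end blocks $\{1,\dots,a-1\}$ and $\{b+1,\dots,n-r\}$). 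If $2r=n$ every spine vertex carries a pendant, so the admissible tree is unique and the statement is trivial; hence I would assume $2r<n$, so that the bare block is nonempty and Lemma~\ref{OL41} (which requires $2(p+q)=2r<n$) is available. As the class is finite and nonempty, choose $T$ with $s(T)$ maximum.

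The crucial step is to show $T\cong C_n(p,q)$ for some $p,q$. Suppose not; then the bare spine vertices do not form one block, so there is a pendant-bearing interior vertex $q_j$ having a bare spine vertex both to its left and to its right. I would apply Proposition~\ref{L26} at $u=q_j$, whose three branches are the left spine-branch $L$, the right spine-branch $R$, and the single pendant $\{\ell\}$. Taking $B_1$ to be the larger and $B_2$ the smaller of $L,R$ (so $|V(B_1)|\geq|V(B_2)|$) and $t=3$, relocate $\ell$ to a bare spine vertex $w\in V(B_2)$; such a $w$ exists precisely because both sides of $q_j$ contain a bare vertex, so in particular the lighter side does. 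The resulting $T'=T-q_j\ell+w\ell$ is again admissible (the spine is unchanged, $w$ now carries its single pendant, $q_j$ becomes bare, and the endpoints still carry pendants), while Proposition~\ref{L26} gives $s(T')>s(T)$, contradicting maximality. Hence $T\cong C_n(p,q)$ with $p+q=r$.

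Finally, once $T\cong C_n(p,q)$, Lemma~\ref{OL41} shows that whenever $p\geq q+2$ one strictly increases the minimum status by passing to $C_n(p-1,q+1)$; iterating, the maximum over this one-parameter family is attained only at $p=\lceil\frac{r}{2}\rceil$, $q=\lfloor\frac{r}{2}\rfloor$. Combining the two steps, every admissible tree not isomorphic to $C_n(\lceil\frac{r}{2}\rceil,\lfloor\frac{r}{2}\rfloor)$ admits a minimum-status-increasing transformation and so has strictly smaller minimum status, which yields both the inequality and the equality characterization. I expect the main obstacle to be the clustering step: one must verify that the interior pendant can always be moved toward the lighter branch \emph{onto a genuinely bare vertex}, so that the hypotheses of Proposition~\ref{L26} hold and the transformed tree remains in the class with the same $n$ and $r$. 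The structural observation that non-contiguity of the bare set forces a pendant vertex flanked by bare vertices on both sides is exactly what guarantees this.
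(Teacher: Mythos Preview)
Your argument is correct and follows essentially the same route as the paper's proof: take a maximizer, use Proposition~\ref{L26} to push a pendant from an interior spine vertex flanked by bare vertices onto a bare vertex in the lighter branch (forcing the bare vertices into a single block, hence $T\cong C_n(p,q)$), and then invoke Lemma~\ref{OL41} to balance $p$ and $q$. The paper phrases the clustering step as ``$T-U$ has exactly two nontrivial components'' where $U$ is the set of interior degree-$2$ spine vertices, but this is exactly your contiguity-of-the-bare-set observation, and the paper's without-loss-of-generality swap of $v_i,v_j$ is your relabelling of $L,R$ so that the bare target lies in $B_2$.
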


\begin{proof}
It is trivial for $r=2,  \frac {n}{2}$ as then   $T\cong C_n(\lceil \frac {r}{2}\rceil, \lfloor \frac {r}{2}\rfloor )$.    Suppose that $2<r<\frac {n}{2}$.  Obviously,  the diameter of $T$ is $n-r+1$.
Let $T$ be a caterpillar satisfying the conditions of the lemma such that its minimum  status is as large as possible.  Let $v_1\dots v_{n-r+2}$ be a diametrical path of $T$ and let $U=\{v\in V(T): d_T(v)=2\} \setminus \{v_2, v_{n-r+1}\}$.
We claim that $T-U$ has exactly two nontrivial components.  Otherwise,  there are three vertices $v_i$,  $v_j$,  and $v_k$ in $T$ such that $d(v_k)=3$ and $\{v_i,  v_j\}\subseteq U$,  where $i < k < j$.  Let $v_k'$ be the unique pendant vertex adjacent to  $v_k$.  Let $B_1$,  $B_2$ be two nontrivial branches of $T$ at $v_k$ with $v_i \in V(B_1)$ and $v_j \in V(B_2)$.  Suppose without loss of generality that $|V(B_1)|\geq |V(B_2)|$.  Let $T'=T-v_kv_k'+v_jv_k'$. Evidently, $T'$ is a caterpillar on $n$ vertices with $r$ pendant vertices, and each vertex of  $T'$ is adjacent to at most one pendant vertex.
By Proposition~\ref{L26},  $s(T') > s(T)$, which is  a contradiction.  Thus $T-U$ has exactly two  nontrivial components, as claimed. That is,   $T\cong C_n(p, q)$ for some $p,q$ with $p+ q= r$.  By Lemma \ref{OL41},  $T\cong C_n(\lceil \frac {r}{2}\rceil, \lfloor \frac {r}{2}\rfloor)$.
\end{proof}

\begin{lemma} \cite{D2}  \label{OL43}
Let $T$ be a tree on $n$ vertices with domination $\gamma$,  where $\gamma> \lfloor\frac{n}{3}\rfloor$.  Then the diameter of $T$ is at most $2n-3\gamma+1$.
\end{lemma}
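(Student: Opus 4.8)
The plan is to prove the equivalent inequality $\operatorname{diam}(T)+3\gamma\le 2n+1$ for \emph{every} tree $T$ by induction on $n$; the hypothesis $\gamma>\lfloor n/3\rfloor$ merely isolates the nontrivial range, since when $\gamma\le n/3$ one has $2n-3\gamma+1\ge n+1>n-1\ge\operatorname{diam}(T)$ and the bound is automatic. Write $d=\operatorname{diam}(T)$ and fix a diametrical path $v_0v_1\cdots v_d$. Because this path is longest, $v_0$ is a leaf and every neighbour of $v_1$ other than $v_2$ is a leaf; moreover any branch hanging at $v_2$ that avoids $v_1$ and $v_3$ has depth at least $1$, a fact I will use to bound how much the diameter can drop. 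The cases $d\le 2$ (stars and short paths) are checked directly as the induction base.

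For the inductive step I delete a set $R\subseteq N_T[v_1]$ near the leaf end, obtaining a smaller tree $T'=T-R$ with parameters $n',d',\gamma'$, and set $\Delta n=n-n'$, $\Delta d=d-d'$, $\Delta\gamma=\gamma-\gamma'$. Applying the induction hypothesis to $T'$ gives $d'+3\gamma'\le 2n'+1$, so it suffices to arrange
\[
\Delta d+3\,\Delta\gamma\le 2\,\Delta n .
\]
Two uniform facts drive every case. First, since $R\subseteq N_T[v_1]$, any dominating set $S'$ of $T'$ has the property that $S'\cup\{v_1\}$ dominates $T$, because $v_1$ dominates all of $R$; hence $\gamma\le\gamma'+1$, i.e. $\Delta\gamma\le 1$. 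Second, $d'\le d$, and the surviving segment $v_j\cdots v_d$ (with $v_j$ the first undeleted path vertex) shows $d'\ge d-j$, so $\Delta d$ is at most the number of path vertices deleted from the $v_0$-end.

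The removal $R$ is chosen according to the structure at $v_1,v_2$. If $v_1$ carries at least two leaves, delete the whole star at $v_1$ (that is, $v_1$ and all its leaves): then $\Delta n\ge 3$, while $\Delta d\le 2$ (the segment $v_2\cdots v_d$ survives) and $\Delta\gamma\le1$, so $\Delta d+3\Delta\gamma\le 5\le 6\le 2\Delta n$. If $\deg_T v_1=2$ and $\deg_T v_2=2$, then $v_0v_1v_2$ is a pendant $P_3$; deleting it gives $\Delta n=3$, $\Delta d\le 3$ and $\Delta\gamma\le1$, hence $\Delta d+3\Delta\gamma\le 6=2\Delta n$. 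Finally, if $\deg_T v_1=2$ but $\deg_T v_2\ge 3$, delete only $\{v_0,v_1\}$: here $\Delta n=2$, and crucially the extra branch at $v_2$ has depth $\ge 1$, so a path through it to $v_d$ has length $\ge d-1$, forcing $d'\ge d-1$, i.e. $\Delta d\le 1$; with $\Delta\gamma\le1$ this gives $\Delta d+3\Delta\gamma\le 4=2\Delta n$. For $d\ge 3$ the vertex $v_2$ is internal, so $\deg_T v_2\ge 2$ and these three cases are exhaustive, and in each $T'$ is again a tree.

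The main obstacle is the control of $\Delta d$: unlike $\Delta n$ and $\Delta\gamma$, the diameter of $T'$ need not fall by the number of deleted path vertices, and a careless deletion (removing too long an initial segment, or a branch together with the path end) can make $\Delta d$ too large and break $\Delta d+3\Delta\gamma\le 2\Delta n$. The case split is engineered precisely so that whenever I must delete a longer initial segment I also remove enough vertices (large $\Delta n$), and whenever only two vertices leave I can invoke the depth-$\ge1$ branch at $v_2$ to guarantee $\Delta d\le 1$. As a sanity check, the corona $P_k\circ K_1$, which attains the bound with equality, falls into the last case and reduces to $P_{k-1}\circ K_1$ with $\Delta n=2,\ \Delta d=1,\ \Delta\gamma=1$, so equality is preserved throughout the induction.
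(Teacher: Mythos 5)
Your argument is correct, but note that there is essentially nothing in the paper to compare it with: Lemma~4.3 is imported from Dankelmann's paper [D2] and is stated without proof, so you have in effect supplied a missing external proof rather than an alternative to an internal one. Checking your steps: the reduction to the unconditional statement $d+3\gamma\le 2n+1$ for every tree is legitimate, since for $\gamma\le n/3$ the claimed bound already exceeds $n-1\ge d$; for $d\ge 3$ your three cases are exhaustive, because on a diametrical path every neighbour of $v_1$ other than $v_2$ is a leaf (so $v_1$ has at least two leaf neighbours exactly when $\deg v_1\ge 3$) and $v_2$ has degree at least $2$; in each case $T'=T-R$ is again a tree, $\gamma\le\gamma'+1$ holds because $v_1$ dominates all of $R$, and the claimed lower bounds on $d'$ are witnessed by explicit paths surviving in $T'$ (namely $v_2\cdots v_d$, $v_3\cdots v_d$, and $uv_2v_3\cdots v_d$ with $u$ a third neighbour of $v_2$, respectively), so $\Delta d+3\Delta\gamma\le 2\Delta n$ holds in all three cases and the telescoping closes the induction. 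The only cosmetic slip is the phrase that a branch hanging at $v_2$ has ``depth at least $1$'': all you need, and all that is true in general, is that such a branch is nonempty, i.e.\ contains a neighbour $u$ of $v_2$, which already yields the path of length $d-1$; the stronger-sounding claim in your opening paragraph is never actually used. Compared with simply citing [D2], your induction is elementary and self-contained, and it transparently exhibits the tightness of the bound (the corona of a path loses exactly $\Delta n=2$, $\Delta d=1$, $\Delta\gamma=1$ at every step, preserving equality).
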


\begin{theorem}
Let $T$ be a tree on $n$ vertices with domination number $\gamma$,  where $\lceil \frac {n}{3}\rceil < \gamma \leq \lfloor \frac {n}{2}\rfloor$.  Then
\[
s(T) \leq3n\gamma +3\gamma-n-\left\lceil\frac{n^2+18\gamma^2}{4}\right\rceil
\]
with equality if and only if $T\cong C_n\left(\lceil \frac {3\gamma-n}{2}\rceil, \lfloor \frac {3\gamma-n}{2}\rfloor\right)$.
\end{theorem}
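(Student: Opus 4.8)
The plan is to mirror the structure of the proof of Theorem~\ref{T42}. First I evaluate the minimum status at the conjectured extremal caterpillar, and then I show that any tree maximizing the minimum status under the prescribed domination constraint must be isomorphic to it. Write $H=C_n\left(\lceil \frac{3\gamma-n}{2}\rceil,\lfloor \frac{3\gamma-n}{2}\rfloor\right)$ and set $r=3\gamma-n$, so that $H$ has exactly $r$ pendant vertices attached to the two ends of a bare spine $P_{n-r}=P_{2n-3\gamma}$; its diameter is $n-r+1=2n-3\gamma+1$, the largest value allowed by Lemma~\ref{OL43}. Using Lemma~\ref{L22} I locate the median as the spine vertex of branch-weight at most $n/2$ near the middle of the spine, and I evaluate $s_H(x)$ by splitting the sum into the contribution of the spine (which behaves like the status of a path) and the contribution of the $r$ pendants, each lying one unit farther than its quasi-pendant neighbour. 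Separating the two parities of $\gamma$, exactly as in the odd/even computation in Theorem~\ref{T42}, should reproduce the stated value $3n\gamma+3\gamma-n-\lceil\frac{n^2+18\gamma^2}{4}\rceil$.

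For the characterization, let $T$ be a tree of order $n$ with domination number exactly $\gamma$ whose minimum status is as large as possible. I stress ``exactly $\gamma$'': in contrast to the small-$\gamma$ regime of Theorem~\ref{T42}, one cannot relax to ``at most $\gamma$'', because the path already attains the global maximum status while having domination number $\lceil n/3\rceil<\gamma$. The first goal is to reduce $T$ to caterpillar form in which each spine vertex carries at most one pendant vertex. As in Claim~1 of Theorem~\ref{T42}, whenever some vertex supports two or more nontrivial branches I apply Proposition~\ref{L26}, reattaching the moved branches at a vertex lying in a fixed $\gamma(T)$-set so that this set remains dominating and the domination number is not increased; any such move strictly raises the minimum status, contradicting maximality. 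This yields $T\cong C_n(p,q)$ with $p\ge q\ge1$, and Lemma~\ref{OL41} together with Lemma~\ref{OL42} forces $p=\lceil\frac r2\rceil$ and $q=\lfloor\frac r2\rfloor$, where $r=p+q$ is the number of pendant vertices.

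It remains to pin down $r$. Since $T$ is now a caterpillar in reduced form, its diameter equals $n-r+1$, so Lemma~\ref{OL43} gives $n-r+1\le 2n-3\gamma+1$, that is, $r\ge 3\gamma-n$. Conversely, a shorter clump of pendants (smaller $r$) produces a longer, more spread-out spine and a strictly larger minimum status, which I would confirm either by a direct comparison of the closed forms from Lemma~\ref{OL42} or by a contraction move as in Proposition~\ref{L23}; combined with the requirement that the domination number stay equal to $\gamma$, this forces $r=3\gamma-n$. A short evaluation of $\gamma\left(C_n(\lceil\frac r2\rceil,\lfloor\frac r2\rfloor)\right)$ at $r=3\gamma-n$ confirms that this caterpillar does have domination number $\gamma$, completing the identification $T\cong H$.

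The step I expect to be hardest is the \emph{simultaneous} management of the two quantities in the reduction. The spreading transformations of Propositions~\ref{L23} and~\ref{L26} that raise the minimum status typically \emph{lower} the domination number, so a move that keeps the domination number at most $\gamma$ while increasing $s$ need not produce a competitor within the class of trees with domination number \emph{equal} to $\gamma$. The delicate point is therefore to choose the reattachment targets inside a $\gamma(T)$-set so that the domination number is preserved exactly, and to derive the lower bound $r\ge 3\gamma-n$ from the diameter inequality of Lemma~\ref{OL43} rather than from the domination number directly. The remaining work, matching the parity bookkeeping in the closed-form evaluation of $s(H)$ to the stated ceiling expression, is purely computational.
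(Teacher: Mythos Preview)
Your proposal has a genuine gap in the extremality setup, and it is precisely the point you yourself flag as hardest. You insist on working in the class of trees with domination number \emph{exactly} $\gamma$, explicitly rejecting the relaxation ``at most $\gamma$'' (correctly, since the path would win there). But you have not considered the relaxation the paper actually uses: take $T$ to maximize $s$ among trees of order $n$ with domination number \emph{at least} $\gamma$. In this class, the spreading moves of Propositions~\ref{L23} and~\ref{L26} are harmless precisely because they tend to \emph{raise} the domination number, so the transformed tree stays in the class. Your proposed fix---reattaching a moved branch at a vertex of a fixed $\gamma(T)$-set---only guarantees that the old set remains dominating, hence $\gamma(T')\le\gamma$; it does nothing to prevent $\gamma(T')<\gamma$, which is exactly the failure mode you worry about. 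So the argument as written does not produce a competitor in your chosen class, and the contradiction never closes.

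Once one works with ``at least $\gamma$'', the reduction is also more delicate than your single paragraph suggests. The paper needs three separate claims: first that no vertex carries two pendant leaves (here the move $T'=T-uv+vw$ increases $\gamma$, not decreases it, so one argues $\gamma(T')\ge\gamma(T)\ge\gamma$); second that every edge at a high-degree vertex is $\gamma$-stable, i.e.\ $\gamma(T-uz)=\gamma(T)$; and third that $T$ is a caterpillar, which requires a nontrivial case analysis on whether the neighbours $x,y,z$ of a branching vertex satisfy $\gamma(B_i-i)=\gamma(B_i)-1$, in order to locate a branch whose reattachment keeps $\gamma\ge\gamma(T)$. Your outline (``whenever some vertex supports two or more nontrivial branches apply Proposition~\ref{L26}'') collapses all of this and, in your ``exactly $\gamma$'' framework, would not even get past the first step. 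The computation of $s(H)$ and the final step pinning $r=3\gamma-n$ via Lemma~\ref{OL43} and Proposition~\ref{L23} are as you describe.
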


\begin{proof}
Let $a=\lceil \frac {3\gamma-n}{2}\rceil$, $b=\lfloor \frac {3\gamma-n}{2}\rfloor$ and $c=\lceil\frac{2n-3\gamma}{2}\rceil$. Obviously, $a+b=3\gamma-n$, and $a=b,b+1$.
Let $H=C_n(a, b)$, as labelled as before. Let $x=v_c$. As  $w_H(x)\leq \frac{n}{2}$,  $x$  is in the median of $H$ by Lemma~\ref{L22}.  Let
 $A=\{v_i:i=1,\dots,{2n-3\gamma}\}$ and $B=\{v_i':i=1,\dots,  a, {2n-3\gamma-b+1}, \dots ,  {2n-3\gamma}\}$.
By direct calculation, we have
\[
\sum_{u\in A}d_H(x,u)=s(P_{2n-3\gamma})=\left\lfloor \frac{(2n-3\gamma)^2}{4}\right\rfloor
\]
and
\begin{eqnarray*}
\sum_{u\in B}d_H(x,u)
&=& \sum_{i=1}^a(c-a+i)+\sum_{i=1}^b(2n-3\gamma-b+1-c+i)\\
&=& \begin{cases}
a(3n-6\gamma+1)+a(a+1)  & \mbox{if } a=b\\
a(3n-6\gamma+1)+a^2-(2n-3\gamma-b+1-c) &  \mbox{if } a=b+1
\end{cases}\\
 &=&\begin{cases}
 \frac{(3\gamma-n)(5n-9\gamma+4)}{4}   &\text{if $3\gamma-n$ is even,}\\[2mm]
 \frac{(3\gamma-n+1)(5n-9\gamma+3)}{4}-\left(\lfloor \frac{2n-3\gamma}{2}\rfloor+1-\frac{3\gamma-n-1}{2}\right)   &\text{if $3\gamma-n$ is odd.}
 \end{cases}
 \end{eqnarray*}
Thus
\begin{eqnarray*}
s(H)&=&\sum_{u\in A}d_H(x,u)+\sum_{u\in B}d_H(x,u)\\
&=& \left\lfloor \frac{(2n-3\gamma)^2}{4}\right\rfloor\\
&&+\begin{cases}
 \frac{(3\gamma-n)(5n-9\gamma+4)}{4}   &\text{if $3\gamma-n$ is even}\\[2mm]
 \frac{(3\gamma-n+1)(5n-9\gamma+3)}{4}-\left(\lfloor \frac{2n-3\gamma}{2}\rfloor+1-\frac{3\gamma-n-1}{2}\right)   &\text{if $3\gamma-n$ is odd}
 \end{cases}\\[2mm]
&=&3n\gamma +3\gamma-n-\left\lceil\frac{n^2+18\gamma^2}{4}\right\rceil.
\end{eqnarray*}
So  \[s\left(C_n\left(\left\lceil \frac {3\gamma-n}{2}\right\rceil,\left\lfloor \frac {3\gamma-n}{2}\right\rfloor\right)\right)=3n\gamma +3\gamma-n-\left\lceil\frac{n^2+18\gamma^2}{4}\right\rceil.\]

Let $T$ be a tree with order $n$ and  domination number at least $\gamma$ such that its minimum status is as large as possible. By the value of $s\left(C_n\left(n,\left\lceil \frac {3\gamma-n}{2}\right\rceil,\left\lfloor \frac {3\gamma-n}{2}\right\rfloor\right)\right)$, it suffices to show that $T\cong C_n\left(\lceil \frac {3\gamma-n}{2}\rceil,\lfloor \frac {3\gamma-n}{2}\rfloor\right)$.

\noindent {\bf Claim 1.}  Each vertex of $T$ is adjacent to  at most one pendant vertex.

Otherwise,  there is a vertex $u$ adjacent to  two pendant vertices,  say $v$ and $w$.  Let $T'=T-uv + vw$.  For any $\gamma(T')$-set $S$,  it contains one of $v$ or $w$,  and thus $S \cup \{u\} \setminus \{x\}$ with $x = v, w$ is a dominating set of $T$, implying
 $\gamma\le \gamma(T)\le |S \cup \{u\} \setminus \{x\}|\le |S|=\gamma(T')$.  By Proposition~\ref{L23},  $s(T') > s(T)$,  a contradiction. Thus, each vertex of $T$ is adjacent to  at most one pendant vertex, as claimed.


\noindent {\bf Claim 2.}   If $d_T(u) \geq 3$,  then $\gamma(T-uz)=\gamma(T)$ for $uz \in E(T)$.

Otherwise,  $\gamma(T-uz)=\gamma(T) + 1$ as $\gamma(T-e)-\gamma(T) = 0, 1$ for $e \in E(T)$.  Let $B_v$ be the branch of $T$ at $u$ containing $v$ for $v \in N_T(u)$.  Let $\{x,  y\}\subseteq N_T (u) \setminus \{z\}$.  Suppose without loss of generality that $|V(B_x)| \geq |V(B_y)|$.  Then $T'= T-uz + yz$ is a tree with $\gamma(T')\geq \gamma(T'-yz)-1= \gamma(T-uz)-1 = (\gamma(T)+1)-1 =\gamma(T) \geq \gamma$. By Proposition~\ref{L26},  $s(T') > s(T)$,  a contradiction. It follows that  $\gamma(T-uz)=\gamma(T)$ for $uz \in E(T)$ if $d_T(u)\ge 3$, as claimed.

\noindent {\bf Claim 3.}    $T$ is a caterpillar.

Otherwise,  there are at least three nontrivial branches $B_x$,  $B_y$,  $B_z$ of $T$ at $u$ containing $x$,  $y$,  $z$, respectively,  where $\{x,  y,  z\} \subseteq N_T (u)$.  Since $d_T(u)\geq 3$, we have by Claim 2 that $\gamma (T-ux) = \gamma(T-uy) =\gamma(T-uz) =\gamma(T)$.  Note that $\gamma(T-w)-\gamma(T) \geq-1$ for $w\in V(T)$.  Suppose that $\gamma(B_x-x)-\gamma(B_x) =\gamma(B_y-y)-\gamma(B_y) = \gamma (B_z-z)-\gamma (B_z) =-1$.  Then, for each  $i =x, y, z$, there is a $\gamma(B_i)$-set $S_i$ such that $i \in S_i$ and $S_i \setminus \{i\}$ is a $\gamma(B_i-i)$-set.  Let $T^*$ be the tree obtained from $T$ by deleting the vertices in $V(B_x)$ and let $S$ be a $\gamma (T^*)$-set.  If $u\notin S$,  then $S'=(S\setminus V(B_y)\cup V(B_z))\cup(S_y\setminus\{y\})\cup(S_z\setminus\{z\})\cup\{u\}$ is a dominating set of $T^*$ with cardinality $|S'| = |S|-\gamma (B_y)-\gamma(B_z)+ \gamma(B_y-y)+ \gamma(B_z-z)+1=|S|-1=\gamma(T^*)-1 <\gamma(T^*)$,  which is impossible. It follows that  $u \in S$. Then  $S \cup(S_x \setminus \{x\})$ is a dominating set of $T$ with cardinality $\gamma (T^*) +\gamma(B_x-x)$.  Therefore $\gamma(T)\leq \gamma (T^*) +\gamma (B_x -x)$.  As $\gamma(T-ux) = \gamma(T^*) + \gamma(B_x)$, we have  $(T-ux)-\gamma(T) \geq \gamma(B_x)-\gamma(B_x-x)=1$,  which implies that $\gamma(T-ux)-\gamma (T) = 1$,  a contradiction to the fact that $\gamma(T-ux)=\gamma (T)$.  Therefore there is a vertex in $\{x,  y,  z\}$,  say $z$,  such that $\gamma(B_z-z)-\gamma(B_z) >-1$.  Assume that $|V(B_x)| \geq |V(B_y)|$.  Let $w \in V(B_y)$ be a quasi-pendant vertex of $T$.  Let $T'= T-uz + wz$.  As $w$ is adjacent to  a unique pendant vertex, there is a $\gamma(T')$-set $R$ containing $w$.  If $z \in R$,  then $R$ is also a dominating set of $T'-wz$,  and so $\gamma(T'-wz)\leq \gamma(T')$.  If $z \notin  R$,  then $R \cap V(B_z-z)$ is a $\gamma(B_z-z)$-set,  and so for a $\gamma(B_z)$-set $R_1$,  $(R \setminus R \cap V(B_z-z)) \cup  R_1$ is a dominating set of $T'-wz$ with cardinality $|R|-\gamma(B_z-z) + \gamma(B_z)\leq |R|$, implying that $\gamma(T'-wz)\leq \gamma(T')$.  Therefore  $\gamma(T') = \gamma(T'-wz) = \gamma(T-uz) = \gamma(T) \geq \gamma$.  By Proposition~\ref{L26},  $s(T') > s(T)$,  a contradiction. This proves Claim 3.

Let $r$ be the number of pendant vertices of $T$.  By Claims 1 and 3 and Lemma \ref{OL42},   we have $T\cong C_n(\lceil \frac {r}{2}\rceil, \lfloor \frac {r}{2}\rfloor)$.
Let $d$ the diameter of $T$.  Then $r = n-d + 1$.  As $\gamma(T) > \lceil \frac {n}{3}\rceil$,  we have by Lemma \ref{OL43} that $d \leq 2n-3\gamma(T) + 1$.  Thus $r\geq 3\gamma(T) -n \geq 3\gamma-n$. Note that $C_n(\lceil \frac {3\gamma-n}{2}\rceil, \lfloor \frac {3\gamma-n}{2}\rfloor)$ has exactly $3\gamma-n$ pendant vertices  and its domination number is $\gamma$.
By Proposition~\ref{L23},  if $r>3\gamma-n$, then $s(C_n(\lceil \frac {r}{2}\rceil, \lfloor \frac {r}{2}\rfloor ))<s(C_n(\lceil \frac {3\gamma-n}{2}\rceil, \lfloor \frac {3\gamma-n}{2}\rfloor ))$, a contradiction. Therefore $r=3\gamma-n$, i.e.,  $G\cong C_n\left(\lceil \frac {3\gamma-n}{2}\rceil, \lfloor \frac {3\gamma-n}{2}\rfloor \right)$.
\end{proof}

Let $G$ be a connected graph on $n$ vertices with domination $\gamma$,  where $\lceil \frac {n}{3}\rceil < \gamma \leq \lfloor \frac {n}{2}\rfloor$.  Then
\[
s(G) \leq3n\gamma +3\gamma-n-\left\lceil\frac{n^2+18\gamma^2}{4}\right\rceil
\]
with equality if  $G\cong C_n\left(\lceil \frac {3\gamma-n}{2}\rceil, \lfloor \frac {3\gamma-n}{2}\rfloor\right)$.

\section{Concluding remarks}

We present sharp lower and upper bounds on the minimum status of a tree using order and matching number (domination number respectively). The trees that attain these bounds are determined. The minimum status is a fundamental graph parameter  to measure the centrality  of a graph or network \cite{AH,BH,RB}.
The notion of centrality has been widely used in many different areas.
Some other parameters, like radius \cite{BH,RB}, average distance \cite{Do}, and distance spectral radius \cite{AH1},  also play roles in the measurement of centrality of a graph. For example, Rissner and Burkard \cite{RB} also established analogous results on trees with minimum and maximum radius. It is of interest to investigate the relationship among these parameters.


\vspace{5mm}

\noindent {\bf Acknowledgement.} This work was supported by the National Natural Science Foundation of China (No.~11671156 and No.~71801186).

\end{document}